\newcommand{\bp}{\begin{proof} \small }
\newcommand{\ep}{\end{proof} \normalsize}
\newcommand{\epx}{\end{proof} \small}
\newcommand{\bpa}{\begin{proofappx} \footnotesize }
\newcommand{\epa}{\end{proofappx} \small }
\newtheorem{theorem}{Theorem}
\newtheorem{proposition}{Proposition}
\newtheorem{remark}{Remark}
\newtheorem*{theorem*}{Theorem}
\newtheorem*{proposition*}{Proposition}
\newtheorem*{corollary*}{Corollary}
\newtheorem*{lemma*}{Lemma}
\newtheorem*{assumption*}{Assumption}
\newtheorem*{definition*}{Definition}
\newtheorem*{claim*}{Claim}
\newcommand{\be}{\begin{equation}}
\newcommand{\ee}{\end{equation}}
\newcommand{\bs}{\begin{subequations}}
\newcommand{\es}{\end{subequations}}
\newcommand{\bq}{\begin{eqnarray}}
\newcommand{\eq}{\end{eqnarray}}
\newcommand{\bqn}{\begin{eqnarray*}}
\newcommand{\eqn}{\end{eqnarray*}}
\newcommand{\ba}{\left[ \begin{array}}
\newcommand{\ea}{\\ \end{array} \right]}
\newcommand{\ben}{\begin{enumerate}}
\newcommand{\een}{\end{enumerate}}
\def\a{{\boldsymbol{a}}}
\def\real{{\mathchoice%
{\hbox{\rm\setbox1=\hbox{I}\copy1\kern-.45\wd1 R}}
{\hbox{\rm\setbox1=\hbox{I}\copy1\kern-.45\wd1 R}}
{\hbox{\scriptsize\rm\setbox1=\hbox{I}\copy1\kern-.45\wd1 R}}
{\hbox{\scriptsize\rm\setbox1=\hbox{I}\copy1\kern-.45\wd1 R}}}}
\def\Zint{{\mathchoice{\setbox1=\hbox{\sf Z}\copy1\kern-.75\wd1\box1}
{\setbox1=\hbox{\sf Z}\copy1\kern-.75\wd1\box1}
{\setbox1=\hbox{\scriptsize\sf Z}\copy1\kern-.75\wd1\box1}
{\setbox1=\hbox{\scriptsize\sf Z}\copy1\kern-.75\wd1\box1}}}
\newcommand{\complex}{ \hbox{\rm C\kern-0.45em\rule[.07em]{.02em}{.58em}%
\kern 0.43em}}
\begin{document}
	
\title{EMM: Energy-Aware Mobility Management for Mobile Edge Computing in Ultra Dense Networks}
\author{Yuxuan~Sun,
	%	~\IEEEmembership{Student~Member,~IEEE,}
	Sheng~Zhou,~\IEEEmembership{Member,~IEEE,}
	and Jie~Xu,~\IEEEmembership{Member,~IEEE}
	\thanks{Y. Sun and S. Zhou are with the Department of Electronic Engineering, Tsinghua University, China. Email: sunyx15@mails.tsinghua.edu.cn, sheng.zhou@tsinghua.edu.cn.  (Corresponding author: S. Zhou)}% <-this % stops a space
	\thanks{J. Xu is with the Department of Electrical and Computer Engineering, University of Miami, USA. Email: jiexu@miami.edu.}% <-this % stops a space
	\thanks{This work is sponsored in part by the Nature Science Foundation of China No. 61571265, No. 91638204, No. 61621091, and Intel Collaborative Research Institute for Mobile Networking and Computing.}
	\thanks{Part of this work has been published in IEEE ICC 2017 \cite{xu2017e2m2}.}
}

\maketitle

\begin{abstract}
	Merging mobile edge computing (MEC) functionality with the dense deployment of base stations (BSs) provides enormous benefits such as a real proximity, low latency access to computing resources. However, the envisioned integration creates many new challenges, among which mobility management (MM) is a critical one. Simply applying existing radio access oriented MM schemes leads to poor performance mainly due to the co-provisioning of radio access and computing services of the MEC-enabled BSs. In this paper, we develop a novel user-centric energy-aware mobility management (EMM) scheme, in order to optimize the delay due to both radio access and computation, under the long-term energy consumption constraint of the user. Based on Lyapunov optimization and multi-armed bandit theories, EMM works in an online fashion without future system state information, and effectively handles the imperfect system state information. Theoretical analysis explicitly takes radio handover and computation migration cost into consideration and proves a bounded deviation on both the delay performance and energy consumption compared to the oracle solution with exact and complete future system information. The proposed algorithm also effectively handles the scenario in which candidate BSs randomly switch on/off during the offloading process of a task. Simulations show that the proposed algorithms can achieve close-to-optimal delay performance while satisfying the user energy consumption constraint.
	%Simulations are carried out to verify the theoretical analysis, and validate the effectiveness of the proposed algorithms.
\end{abstract}

\begin{IEEEkeywords}
	Mobile edge computing, mobility management, Lyapunov optimization, multi-armed bandit, handover cost.
\end{IEEEkeywords}

\section{Introduction}
Ultra dense networking (UDN) \cite{quek2013small} and mobile edge computing (MEC) (a.k.a. fog computing) \cite{hu2015mobile} \cite{mao2017mobile} are regarded as key building blocks for the next generation mobile network. UDN increases the network capacity through the ultra-dense deployment of small cell base stations (BSs), as a key technology addressing the so-called 1000x capacity challenge \cite{chen2014requirements}. MEC provides cloud computing and storage resources at the edge of the mobile network, creating significant benefits such as ultra-low latency, intensive computation capabilities while reducing the network congestion, which are necessary for emerging applications such as Internet of things, video stream analysis, augmented reality and connected cars\cite{etsimec004}. 

It is envisioned that endowing each radio access node with cloud functionalities will be a major form of MEC deployment scenarios, i.e., MEC-enabled UDN\cite{etsimec002}. However, current studies on UDN and MEC are mostly separate efforts. Despite the enormous potential benefits brought by the integration of UDN and MEC, a key challenge for the overall system performance is mobility management (MM),
%significant new challenges are created at the same time. A key challenge for the overall system performance is MM,
which is the fundamental function of associating mobile devices with appropriate BSs on the go, thereby enabling mobile services (i.e. radio access and computing) to be delivered. Traditionally, MM was designed for providing radio access only. Merging UDN and MEC drastically complicates the problem. Simply applying existing solutions leads to poor MM performance mainly due to the co-provisioning of radio access and computing services. In particular, MM for MEC in UDN faces the following three major challenges:

1) The first challenge is the lack of accurate information (e.g., radio access load, computation load, etc.) of candidate BSs on the user side, especially when MM is carried out in a user-centric manner. If the user does not know a priori which BS offers the best performance, the MM can be very difficult.

2) An even severe challenge is the unavailability of future information (e.g., future tasks for computation offloading, candidate BSs, channel conditions, available edge cloud resources, etc.). Since the mobile user has limited battery power, the long-term energy budget couples the short-term MM decisions across time, and yet the decisions have to be made without foreseeing the future.

3) Moreover, UDN is a very complex and volatile network environment due to the fact that many small cell BSs are owned, deployed and managed by end-users. In addition, the operator often implements BS sleeping techniques for energy saving. As a result, candidate BSs can be randomly switched on/off over time, thus demanding for a MM algorithm that can fast track the optimal BS for performance optimization. 

%In this paper, we study user-centric MM for MEC-enabled UDN and develop energy-aware MM algorithms, called EMM, that can over come the above challenges. Our algorithms (i) smartly learn the best BSs to offload tasks to and make proper handover decisions in an online fashion, and (ii) satisfies both short-term and long-term performance requirements. 
%This is mainly achieved by leveraging and integrating the Lyapunov optimization technique and the Multi-armed bandit (MAB) learning framework. In addition, we provide strong performance guarantee by employing our algorithms.

\subsection{Related Work}
Mobile edge computing has received an increasing amount of attentions recently, see \cite{mao2017mobile} for a comprehensive survey. A central theme of many prior studies is to design task offloading policies and resource management schemes, i.e. what/when/how to offload a user's workload from its device to the edge system or cloud, and how much radio and computing resources should be allocated to each user. For a single-user MEC system, an energy-optimal binary offloading policy is proposed in \cite{zhang2013energy} by comparing the energy consumption of local execution and offloading, while a delay-optimal task scheduling policy with random task arrivals is proposed in \cite{liu2016delay}. For multi-user MEC systems, both centralized \cite{you2016energy} and distributed \cite{chen2016efficient} radio and computation resource management schemes are studied to optimize system-level performance. However, most of the existing works consider a single MEC server, and overlook the user mobility issue.   

Mobility management has been extensively investigated in LTE systems. For example, the solutions in \cite{xenakis2014mobility} work efficiently in less-densified heterogeneous networks, but may bring new problems such as frequent handover and the Ping-Pong effect when the network density becomes high \cite{lopez2012mobility}. To address this challenge, an energy-efficient user association and power control policy is proposed in \cite{7842367}, while a learning-based MM scheme is proposed in \cite{shen2016non} based on the multi-armed bandits (MAB) theory \cite{auer2002finite}. Both schemes work in a user-centric manner, which has been an emerging trend of MM for the future 5G network \cite{chen2016user}. However, all these works merely consider the radio access. Endowing BSs with MEC capabilities requires new MM solutions. 

There are a few works considering service migration, which is a key component of MM in MEC. 
An optimal computation migration policy is designed in \cite{taleb2016follow}, in order to reduce the migration cost while maintaining good user quality of service. The optimal policy is proved to be threshold-based w.r.t. the migration cost and backhaul data transmission cost in \cite{wang2014mobility}. However, the radio access aspect has not been considered in these works.
%An optimal computation migration policy is designed in \cite{taleb2016follow} in order to reduce migration cost while maintaining good user quality of service. Based on MDP, a threshold based structure of the optimal policy is found through simulations. The existence of the optimal threshold policy is further proved in \cite{wang2014mobility}. However, the radio access aspect has not been considered in these works.

Motivated by the limitations of the current literature, we design user-centric MM algorithms in MEC-enabled UDN in this paper. 
Our work aims to provide guidance to the user about which BS and MEC server should be selected and when to perform handover, with the challenges of lacking both the accurate future information and current BS-side information. By integrating the Lyapunov optimization technique \cite{neely2010stochastic} and MAB theory \cite{auer2002finite}, we solve an average delay minimization problem under a long-term energy budget constraint, and prove that our proposed algorithms can provide strong performance guarantee. 
Note that our work provides the BS association decisions, which can be supported by the link layer handover protocols \cite{lopez2012mobility}, while further served as the basis of the network layer MM protocols, such as Proxy Mobile IPv6 protocol\cite{modares2016asurvey},
%Note that our work is the basis of the network layer MM protocols, such as Proxy Mobile IPv6 protocol\cite{modares2016asurvey}, while being supported by the link layer handover protocols \cite{lopez2012mobility}.
%The specific association and handover process is supported by other MM protocols, such as the network layer Proxy Mobile IPv6 \cite{modares2016asurvey} and the data link layer handover protocols \cite{lopez2012mobility}. 
Different from the conference version of this work \cite{xu2017e2m2}, we introduce a more general model considering transmission delay and BS handover cost, and provide new theoretical analysis and simulation results. Moreover, we develop a new algorithm based on the volatile MAB (VMAB) framework \cite{bnaya2013social} to handle random BS on/off during task offloading. 

%Lyapunov optimization technique is a powerful technique for discrete stochastic network optimization \cite{neely2010stochastic}. 
%It has been already used to design offloading policies in MEC systems \cite{moshref2014dream}\cite{mao2016dynamic} in order to optimize an infinite horizon long-term average performance, with the advantages of low computational complexity, working online and quantifiable worst-case performance. 
%Meanwhile, MAB is a widely adopted learning framework to solve sequential decision making problems, which can efficiently balance between exploration and exploitation and minimize the learning regret\cite{auer2002finite}. 
%By integrating the two techniques, our proposed algorithms can efficiently learn the short-term optimal BS, while balancing the delay requirement and the energy budget constraint with strong performance guarantee. 

%There are few works applying MAB to design mobility management schemes. For example, \cite{7127696} uses MAB to learn the cell range expansion parameter for users with high velocity in a heterogeneous network.

\subsection{Contributions}
%Our contributions are summarized below:

1) We develop a novel energy-aware user-centric MM scheme, called EMM, to overcome the aforementioned challenges by leveraging the combined power of Lyapunov optimization and MAB theories. The proposed EMM algorithm can deal with various practical deployment scenarios, including those in which the user has limited BS-side information and the BSs dynamically switch on and off. 

2) We rigorously characterize the performance of the proposed EMM algorithms. We prove that the EMM algorithms can achieve close-to-optimal performance within a bounded deviation without requiring future system information, while satisfying the long-term energy budget constraint. Moreover, we quantify the performance loss due to learning the BS-side information in terms of the learning regret, explicitly taking into account the additional cost caused by radio handover, computation migration and varying candidate BSs. 

3) Extensive simulations are carried out to evaluate the performance of the EMM algorithm and validate our theoretic findings. The results confirm that our proposed algorithm can achieve close-to-optimal delay performance compared to the oracle solution with exact and complete future system information, while satisfying the energy consumption constraint of the user. Simulations also reveal the impact of design parameters on the system performance, thereby providing guidelines for real-world deployment of MEC in UDN.

The rest of this paper is organized as follows. 
%Section \ref{relatedwork} discusses the related work. 
We describe the system model and formulate the problem in Section~\ref{secmodel}.  Section~\ref{secgsi} and \ref{seclsi} develop EMM algorithms and conduct performance analysis. Section \ref{seclsid} extends the algorithm to handle varying BS sets. Simulation results are provided in Section \ref{secsim}, followed by the conclusion in Section \ref{seccon}.

%\section{Relate Work}\label{relatedwork}

\section{System Model and Problem Formulation} \label{secmodel}
\begin{figure}
	\centering
	% Requires \usepackage{graphicx}
	\includegraphics[width=0.48\textwidth]{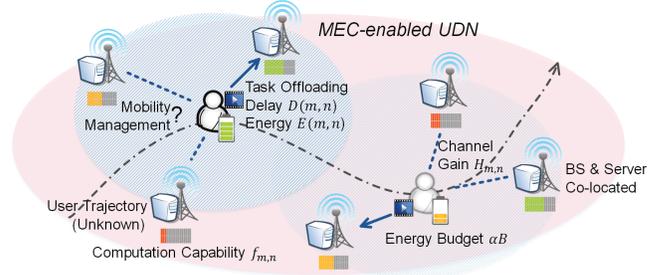}
	\hspace{-0.2in}
	\caption{Illustration of the considered user-centric MM in MEC-enabled UDN. A representative user with unknown trajectory offloads each computation task $m$ to one of the candidate BSs $n$, based on the overall delay $D(m,n)$ (the sum of communication, computation and handover delay) and the energy consumption $E(m,n)$ for data transmission. The objective is to minimize the average delay under the energy consumption budget $\alpha B$.}\label{system}
	\vspace{-0.2in}
\end{figure}

\subsection{Network Model}
We consider a network with $N$ densely deployed BSs indexed by $\mathcal{N}=\{1,2,...,N\}$,  as shown in Fig. \ref{system}.
Each BS is endowed with cloud computing functionalities, which is considered as one of the main deployment scenarios of MEC \cite{etsimec002}. We focus on a representative mobile user moving in the network, who generates totally $M$ computation tasks over time, and these tasks are offloaded to the BS for computing. Let $L_m$ denote the location where task $m$ is generated. No prior knowledge about the user trajectory is required. In other words, our work is applicable to any mobility model, such as the random waypoint model or others as described in \cite{batabyal2015mobility}.

Multiple BSs can provide service to the user at any location $L_m$ due to the dense deployment. Denote $\mathcal{A}(L_m) \subseteq \mathcal{N}$ as the set of BSs that cover location $L_m$. After task $m$ is generated, the MM scheme makes decisions on which BS serves the user, among the set of BSs $\mathcal{A}(L_m)$. 
We design user-centric MM schemes, i.e., the user makes the BS association and handover decisions, which is a promising candidate in the 5G standards \cite{chen2016user}. Moreover, we focus on a local computation scenario, i.e., the associated BS is responsible for providing both radio access and edge computing services without further offloading the computation tasks to other BSs or the remote cloud.

%In LTE standards, the user is responsible for handover measurement and feedback, while the BS or the Mobility Management Entity (MME) makes the handover decisions \cite{xenakis2014mobility}. Recently, the network is evolved from the cell-centric architecture into a user-centric MM, which is promising for the 5G standard \cite{chen2016user}. Therefore, 

\subsection{Computation Task and Service Model}
A widely used three-parameter model (see \cite{mao2017mobile} and references therein) is adopted to describe each computation task $m$: input data size $\lambda_m\in [0, \lambda_\text{max}]$ (in bits) that needs to be offloaded, computation intensity $\gamma_m \in [0, \gamma_\text{max}]$ (in CPU cycles per bit) indicating how many CPU cycles are required to compute one bit input data, and completion deadline $D_m$. Parameters $\lambda_\text{max}$ and $\gamma_\text{max}$ are the maximum possible input data size and computation intensity, respectively. 

%If the task is finished after the deadline, then the computation result is useless to the user. However, the user still prefers to receive the result as soon as possible to gain higher utilities.

%Denote $\lambda_m \in [0, \lambda_\text{max}]$ as the input data size of task $m$ (in bits) that needs to be offloaded, where $\lambda_\text{max}$ is the maximum possible input data size. Let $\gamma_m \in [0, \gamma_\text{max}]$ denote the computation intensity of task $m$ (in CPU cycles per bit) with maximum value $\gamma_\text{max}$, which indicates how many CPU cycles are required to compute one bit input data. Moreover, denote $D_m$ as the completion deadline of task $m$ (in seconds). If the task is finished after the deadline, then the computation result is useless to the user. However, the user still prefers to receive the result as soon as possible to gain higher utilities. 

Each computation task is relatively large and hence can be further divided into many subtasks that must be processed in sequence (e.g. computing the subsequent subtasks requires the results of the previous subtasks). 
Taking video stream analytics as an example, like object detection or tracking from a video stream, the analysis can be operated on the edge server using the Hadoop MapReduce framework \cite{anjum2016video}. A relatively long video frame is further divided into many short video clips through video segmentation, each having a number of video frames. Note that our work is orthogonal to the video segmentation problem \cite{5539893}, and we omit the overhead of video segmentation for simplicity, which can be seen as an additional constant delay to the system performance.
Let $K_m \leq \bar{K}$ be the number of subtasks of task $m$, where $\bar{K}$ is maximum number of subtasks. Assume that subtasks are of the equal size $\lambda_0$ for analytical simplicity (hence $\lambda_m = K_m\lambda_0$). Nevertheless, our framework can handle subtasks of heterogeneous sizes.

Each BS $n\in\mathcal{N}$ is equipped with an MEC server of maximum CPU frequency $F_n$ (in CPU cycles per second), and can provide computation services for multiple tasks from multiple users simultaneously using processor sharing. We use computation capability $f_{m,n}$ to describe the CPU frequency that BS $n$ can allocate to task $m$, which depends on several factors on the BS side, such as the maximum CPU frequency $F_n$, the current total workload intensity, etc. We assume that $f_{m,n}$ does not change during the processing of one task but can change across tasks. If BS $n$ is selected to compute a subtask of size $\lambda_0$ and computation intensity $\gamma_m$, then given the allocated CPU frequency $f_{m,n}$, the computation delay is
\begin{align}
d_c(m,n) = \frac{\lambda_0\gamma_m}{f_{m,n}}.
\end{align}

\subsection{Communication and Energy Consumption Model}
The input data is transmitted from the user to the serving BS through the wireless uplink channel. Denote $H_{m,n}$ as the channel gain between the user at location $L_m$ and BS $n\in \mathcal{A}(L_m)$. We assume that during the computation of each task $m$, the user does not move much and hence $H_{m,n}$ is constant. Nevertheless, if the user moves considerably, we consider that one task is divided into multiple subtasks, and for each subtask the user stays more or less at the same location. Given the transmission power $P_{\mathrm{tx}}$ of the user, the maximum achievable uplink transmission rate is given by:
\begin{align}
r(m,n) = W\log_2\left(1 + \frac{P_{\mathrm{tx}}H_{m,n}}{\sigma^2 + I_{m,n}}\right),
\end{align}
where $W$ is the channel bandwidth, $\sigma^2$ is the noise power and $I_{m,n}$ is the inter-cell interference power at BS $n$ while offloading task $m$. The transmission delay for sending the input data of size $\lambda_0$ to BS $n$ is thus
\begin{align}  \label{transdelay}
d_t(m,n) = \frac{\lambda_0}{r(m,n)}.
\end{align}
Also, the energy consumption for offloading a subtask for task $m$ is therefore
\begin{align}
e(m, n) = \frac{P_{\mathrm{tx}} \lambda_0}{r(m,n)}.
\end{align}

\begin{remark}
	Downlink transmission delay and packet loss are not considered in this work. Nevertheless, the following analysis and the proposed solutions are still applicable with these considerations. For example, downlink transmission delay and packet loss can be reflected by additional transmission delay that changes expression \eqref{transdelay}.
\end{remark}

\subsection{Handover and Migration Cost Model}
For each computation task $m$, its subtasks must be computed in sequence, but can be offloaded to different BSs. 
This may be because the user learns that the serving BS's computing capability is weak (we will introduce the learning problem in Section \ref{seclsi}) and hence decides to switch to a different BS in its vicinity or BSs can appear or disappear in the transmission range of the user due to dynamic BS on/off for energy saving \cite{7114343}. When consecutive subtasks are processed on different BSs, an additional delay cost is incurred due to the handover procedure and the computation migration. Let $C_m$ be the one-time handover cost for task $m$. Given the sequences of BSs that serve its subtasks, denoted by $\a_m = (a_m^1,a_m^2,...,a_m^{K_m})$, the overall handover cost for task $m$ is
\begin{align}
h(m, \a_m) = C_m \sum_{k=2}^{K_m}\mathbb{I}\{a^k_m\neq a^{k-1}_{m}\},
\end{align}
where $a^k_m \in \mathcal{A}(L_m)$ is the serving BS for subtask $k$ of task $m$, and $\mathbb{I}\{x\}$ is an indicator function with $\mathbb{I}\{x\} = 1$ if event $x$ is true and $\mathbb{I}\{x\} = 0$ otherwise.

\subsection{Problem Formulation} \label{secfor}
Mobile users often have limited energy budgets (e.g., due to limited battery capacity). Therefore, the objective of the mobile user is to make MM decisions, specifically which BS to associate and when to perform handover, in order to minimize the average delay given its limited energy budget. For task $m$, the overall delay is
\begin{align}
D(m, \a_m) = \sum_{k=1}^{K_m} d(m, a_m^k) + h(m, \a_m),
\end{align}
where $d(m, a_m^k) \triangleq d_c(m, a_m^k) + d_t(m, a_m^k)$ is the sum of computation delay and uplink transmission delay for subtask $k$. The overall energy consumption for processing task $m$ is
\begin{align}
E(m, \a_m) = \sum_{k=1}^{K_m} e(m, a_m^k).
\end{align}

Formally, the problem is formulated as follows
\begin{align}
\textbf{P1:}&~~\min_{\a_1,...,\a_M} ~\frac{1}{M}\sum_{m=1}^{M} D(m, \a_m) \label{obj}\\
\text{s.t.} &~~\sum_{m=1}^{M} E(m, \a_m)\leq \alpha B \label{budget}\\
&~~D(m, \a_m) \leq D_m, ~\forall m \label{maxdelay}\\
&~~a^k_m \in \mathcal{A}(L_m), \forall m, \forall k = 1,2,...,K_m \label{coverage}.
\end{align}
The first constraint \eqref{budget} states that the total energy consumption is limited by the energy budget of the user, where $\alpha \in (0, 1]$ indicates the desired capping of energy consumption relative to the total battery capacity $B$. 
The second constraint \eqref{maxdelay} requires that the overall delay for processing task $m$ does not exceed the completion deadline $D_m$. Note that even if we set up a deadline for each task, the user still prefers to receive the result as soon as possible.
The last constraint \eqref{coverage} states that the associated BSs are those that cover location $L_m$.

There are two major challenges to solve problem \textbf{P1}. First, optimally solving \textbf{P1} requires complete non-causal information over the entire trip of the user, including parameters of all tasks, user trajectory, traffic intensity of all BSs, etc., which is impossible to acquire in advance. Furthermore, \textbf{P1} belongs to integer nonlinear programming problem. Even if the complete future information is known a priori, it is still difficult to solve due to the high complexity. Therefore, we will propose online algorithms that can efficiently make MM decisions without the future information.

\subsection{Oracle Benchmark and Theoretical Upper Bound} \label{jstep}
In this subsection, we describe an algorithm that knows the complete future information for the next $J$ computation tasks. Albeit impractical, the purpose of introducing this algorithm is merely to provide theoretical upper bounds on the performance of any practical online algorithm. We will prove later that our proposed algorithm achieves close-to-optimal performance by comparing to this oracle benchmark.

The $J$-step lookahead problem is defined as 
\begin{align}
\textbf{P2:}&~~\min_{\a_{rJ+1},...,\a_{(r+1)J}}~ \frac{1}{J}\sum_{m=rJ+1}^{(r+1)J }D(m, \a_m)\\
\text{s.t.} &~~\sum_{m=rJ+1}^{(r+1)J} E(m, \a_m) \leq \frac{\alpha B}{R} \label{budgetJ}\\
&~~\text{constraints  \eqref{maxdelay}, \eqref{coverage}} \label{constraintJ}.
\end{align}
The entire trip of the user is divided into $R \geq 1$ frames. In each frame, the user generates $J \geq 1$ tasks and hence $M = RJ$. We assume that there is an oracle that provides accurate information of the subsequent $J$ tasks at the beginning of each frame. Given this information, the user can obtain the MM decisions for the next $J$ tasks by solving the $J$-step lookahead problem \textbf{P2}.

%In this algorithm, the entire trip of the user is divided into $R \geq 1$ frames. In each frame, the user generates $J \geq 1$ tasks and hence $M = RJ$. We assume that there is an oracle that provides accurate information of the subsequent $J$ tasks at the beginning of each frame. Given this information, the user can obtain the MM decisions for the next $J$ tasks by solving the following $J$-step lookahead problem,
%\begin{align}
%\textbf{P2:}&~~\min_{\a_{rJ+1},...,\a_{(r+1)J}}~ \frac{1}{J}\sum_{m=rJ+1}^{(r+1)J }D(m, \a_m)\\
%\text{s.t.} &~~\sum_{m=rJ+1}^{(r+1)J} E(m, \a_m) \leq \frac{\alpha B}{R} \label{budgetJ}\\
%&~~\text{constraints  \eqref{maxdelay}, \eqref{coverage}} \label{constraintJ}.
%\end{align}
Clearly if $R = 1$, then the $J$-step lookahead problem is the original offline problem \textbf{P1}. 
Assume that for all $r=0,1,..., R-1$, there exists at least one sequence of MM decisions $\a_{rJ+1}, ..., \a_{(r+1)J}$ that satisfy the constraints of \textbf{P2}. 
Denote $g^*_r$ as the optimal average delay achieved by \textbf{P2} in the $r$-th frame. Thus $g^*=\frac{1}{R}\sum_{r=0}^{R-1}g^*_r$ is the minimum long-term average delay achieved by the $J$-step lookahead problem.

%The minimum long-term average delay achieved by the optimal solution of $J$-step lookahead problem is thus given by $g^*=\frac{1}{R}\sum_{r=0}^{R-1}g^*_r$, which serves as a theoretical upper bound on the performance of any practical MM algorithms.

\section{Online Mobility Management Framework}\label{secgsi}
In this section, we develop a framework that supports online MM requiring only causal information. Specifically, when making the MM decisions for task $m$, the user has no information about tasks $m+1$, $m+2$, ... . We will prove that our proposed algorithm achieves close-to-optimal performance compared with the oracle algorithm with $J$-step lookahead. The information regarding task $m$ can be classified into two categories depending on which entity possesses the information:
\begin{itemize}
	\item \textbf{User-Side State Information}: The user's location $L_m$, the available candidate BSs $\mathcal{A}(L_m)$,  the input data size $\lambda_m$ and the computation intensity $\gamma_m$.
	\item \textbf{BS-Side State Information}: For each BS $n \in \mathcal{A}(L_m)$, the allocated CPU frequency $f_{m,n}$, the uplink channel gain $H_{m,n}$ and the inter-cell interference $I_{m,n}$.
\end{itemize}

%1) \textbf{User-Side State Information}: The user's location $L_m$, the available candidate BSs $\mathcal{A}(L_m)$,  the input data size $\lambda_m$ and the computation intensity $\gamma_m$.
%
%2) \textbf{BS-Side State Information}: For each BS $n \in \mathcal{A}(L_m)$, the allocated CPU frequency $f_{m,n}$, the uplink channel gain $H_{m,n}$ and the inter-cell interference $I_{m,n}$.

Depending on whether the user has the BS-side state information, we will consider two deployment scenarios. In the first scenario, the user knows both the user-side state information and BS-side state information exactly, i.e., the user has Global State Information (GSI). In the second scenario, the user only has the user-side state information, i.e., the user has Local State Information (LSI). In this case, the user needs to learn the BS-side state information in order to make proper MM decisions.

%Next, we present the online MM framework for the scenario with GSI. We will consider LSI in the subsequent sections. %We propose an online energy-aware MM (EMM) algorithm based on Lyapunov optimization \cite{neely2010stochastic}. Then we prove the performance guarantee of the proposed algorithm.

\subsection{EMM-GSI Algorithm}
In this subsection, we present online MM framework for the scenario with GSI.
Assume that the serving BS set does not change during one task, then it is clear that if the user has GSI, radio handover and computation migration of subtasks can be avoided. It is straightforward for the user to select the best BS for offloading and computation and stick to the BS for the entire task. Therefore, for each task $m$, all the subtasks are served by the optimal BS $a_m^*$, i.e., $a_m^1=a_m^2=...=a_m^{K_m}=a_m^*$. 
We use $D(m, n)$ to denote the overall delay and $E(m, n)$ to denote the overall energy consumption by associating to BS $n$ for task $m$ with GSI.

However, a significant challenge remains in directly solving \textbf{P1} since the long-term energy consumption budget couples the MM decisions across different tasks: using more energy for the current task will potentially reduce the energy budget available for future uses, and yet the decisions have to be made without foreseeing the future. To address this challenge, we leverage Lyapunov optimization technique which enables us to solve a deterministic problem for each task with low complexity, while adaptively balancing the delay performance and energy consumption over time.

To guide the MM decisions with Lyapunov optimization technique, we first construct a virtual energy deficit queue. Specifically, the energy deficit queue evolves as
\begin{align}\label{queue}
q(m+1) = \max \{q(m) + E(m, a_m^*) - \alpha B/M, 0\},
\end{align}
with $q(0)=0$. The virtual queue length $q(m)$ indicates how far the current energy usage deviates from the battery energy budget. Since the battery capacity of the user device is finite, it is necessary to consider the case with finite tasks and propose an approach that can guarantee the worst-case delay performance over the finite time horizon. Moreover, both the user-side state information and BS-side state information may not follow a well-defined stochastic process. Therefore, we do not make any ergodic assumptions on the state information. Instead, we adopt a non-ergodic version of Lyapunov optimization, which applies to any arbitrary sample path of the task and system dynamics. The algorithm is called EMM-GSI, as shown in Algorithm 1.

%Before presenting our algorithm, we first define the Lyapunov function $L(q(m))$ as
%\begin{align}
%	L(q(m)) \triangleq \frac{1}{2}q^2(m)
%\end{align}
%Moreover, we define the 1-slot Lyapunov drift $\Delta_1(m)$ as:
%\begin{align}
%\Delta_1(m) = L(q(m+1)) - L(q(m))
%\end{align}
%where a ``slot'' refers to the duration of offloading and computation for a task.
%Therefore, the 1-slot drift-plus-penalty function can be expressed as $\Delta_1(m)+Vd(a_m)$, where $V>0$ is a control parameter that affects the tradeoff between delay performance and energy consumption.
%
%\newtheorem{lemma}{Lemma}
%\newtheorem{theorem}{Theorem}
%\newtheorem{proposition}{Proposition}
%\theoremstyle{remark}

%We first upper bound the 1-slot drift-plus-penalty function as follows.
%
%\begin{lemma}
%For any task $m=1,2,...,M$, any virtual queue $q(m)$ and any MM decisions, the following upper bound on the drift-plus-penalty function holds:
% \begin{align}
% 	&\Delta_1(m)+VD(m, a_m)  \nonumber\\
% &	\leq U + VD(m, a_m) +q(m)(E(m, a_m) - \alpha B/M)  \label{driftbound}
% \end{align}
%where $U$ is a constant defined as
% \begin{align}
% 	U \triangleq \frac{1}{2} \max\{(E(m, a_m) - \alpha B/M)^2\} \label{constant}
% \end{align}
% \end{lemma}
%\begin{proof}
%	See Appendix~\ref{drift}.
%\end{proof}
%
%The proposed EMM-GSI algorithm aims to minimize the right-hand-side of \eqref{driftbound} for each task. 
%The algorithm is shown in Algorithm 1.

\begin{algorithm}
	\caption{EMM-GSI Algorithm}
	\begin{algorithmic}[1]
		\State \textbf{Input}: $L_m$, $\mathcal{A}(L_m)$, $\lambda_m$, $\gamma_m$, and $\forall n \in \mathcal{A}(L_m)$, $f_{m,n}$, $H_{m,n}$, $I_{m,n}$ at the beginning of offloading each task $m$.
		\If{$m = rJ+1, \forall r = 0,1,...,R-1$}
		\State $q(m) \leftarrow 0$ and $V \leftarrow V_r$.
		\EndIf
		\State Choose $a_m^*$ subject to \eqref{maxdelay}, \eqref{coverage} by solving
		\begin{align}
		(\textbf{P3})~~\min_{n\in\mathcal{A}(L_m)}~VD(m, n) + q(m)E(m, n). \nonumber
		\end{align}
		\State Update $q(m)$ according to \eqref{queue}.
	\end{algorithmic}
\end{algorithm}

Note that EMM-GSI algorithm works in an online fashion, because it only requires the currently available information as the inputs. $V_0, V_1, ..., V_{R-1}$ is a sequence of positive control parameters to dynamically adjust the tradeoff between delay performance and energy consumption over the $R$ frames, each with $J$ periods. Lines 2 - 4 reset the energy deficit virtual queue at the beginning of each frame. Line 5 defines an online optimization problem \textbf{P3} to decide the MM decisions for each task, which is a minimum seeking problem with computational complexity $O(|\mathcal{A}(L_m)|)$, where $|\mathcal{A}(L_m)|$ is the number of candidate BSs for task $m$.
The optimization problem aims to minimize a weighted sum of the delay cost and energy consumption where the weight depends on the current energy deficit queue length and is varying over time. A large weight will be placed on the energy consumption if the current energy deficit is large. The energy deficit queue maintains without foreseeing the future, thereby enabling online decisions. Note that since there is no radio handover and computation migration, \textbf{P3} is equivalent to
\begin{align}
\min_{n\in\mathcal{A}(L_m)} Vd(m, n) + q(m)e(m, n).
\end{align}
Conveniently, we write $z(m,n) \triangleq Vd(m,n) + q(m)e(m,n)$.

\subsection{Performance Bound}
In this subsection, we present the performance analysis of the EMM-GSI algorithm. Under the feasibility assumption that there exists at least one solution to \textbf{P2}, Theorem 1 provides the performance guarantee of EMM-GSI algorithm.

\begin{theorem}
	For any fixed integer $J \in \mathbb{Z}_+$ and $R \in \mathbb{Z}_+$ such that $M = RJ$, the following statements hold.	
	
	(1) The average delay performance achieved by EMM-GSI algorithm satisfies:
	\begin{align}
	d^*_G \leq \frac{1}{R}\sum_{r=0}^{R-1}g^*_r + \frac{UJ}{R}\sum_{r=0}^{R-1}\frac{1}{V_r},
	\end{align}
	where $g^*_r$ is the optimal average delay of the $J$-step lookahead problem for frame $r$, and $U$ is a constant defined as $	U \triangleq \frac{1}{2} \max\{(E(m, a_m^*) - \alpha B/M)^2\} \label{constant}$.

	(2) The total energy consumption is within a bounded deviation:
	\begin{align}
	e^*_G\leq \alpha B + \sum_{r=0}^{R-1} \sqrt{2UJ^2 + 2V_rJg^*_r}	.
	\end{align}
\end{theorem}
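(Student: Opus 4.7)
The plan is to invoke the Lyapunov drift-plus-penalty machinery with the virtual deficit queue $q(m)$ defined in \eqref{queue}, compared frame-by-frame against the $J$-step lookahead oracle \textbf{P2}. I begin by setting $L(m) = \tfrac{1}{2}q(m)^2$ and $\Delta(m) = L(m+1) - L(m)$; squaring \eqref{queue} and using $\max\{x,0\}^2 \le x^2$ gives the standard single-task bound
$$\Delta(m) \le U + q(m)\bigl(E(m, a_m^*) - \alpha B / M\bigr),$$
with the same constant $U$ as in the statement. Adding $V_r D(m, a_m^*)$ to both sides produces the drift-plus-penalty; because the EMM-GSI algorithm picks $a_m^*$ to minimize $V_r D(m, n) + q(m) E(m, n)$ over all feasible $n$, it is in particular no worse than the oracle's per-task decision $\hat{a}_m$, which yields
$$\Delta(m) + V_r D(m, a_m^*) \le U + V_r D(m, \hat{a}_m) + q(m)\bigl(E(m, \hat{a}_m) - \alpha B / M\bigr).$$

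Summing this inequality over $m = rJ{+}1, \dots, (r{+}1)J$ and invoking the reset $q(rJ+1) = 0$ telescopes the drift into $L((r+1)J+1) \ge 0$, and the oracle delays collapse to $J g_r^*$. The main obstacle is the residual cross term $\sum_m q(m)(E(m, \hat{a}_m) - \alpha B / M)$: its unweighted counterpart is nonpositive by the per-frame budget \eqref{budgetJ}, but the $q(m)$ weights entangle the estimate with the algorithm's own queue trajectory. I would control it with the crude per-step bound $|q(m+1) - q(m)| \le \sqrt{2U}$, which iterates from $q(rJ+1) = 0$ to give $q(m) \le (m - rJ - 1)\sqrt{2U}$ inside the frame; combined with $|E(m, \hat{a}_m) - \alpha B/M| \le \sqrt{2U}$, this produces
$$\sum_{m=rJ+1}^{(r+1)J} q(m)\bigl(E(m, \hat{a}_m) - \alpha B/M\bigr) \le UJ(J-1).$$
Substituting back gives $V_r \sum_m D(m, a_m^*) \le UJ^2 + V_r J g_r^*$, and dividing by $V_r J$ followed by averaging over the $R$ frames yields the delay statement in part (1).

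For part (2) I would read the energy bound directly off the queue dynamics. Dropping the outer $\max$ in \eqref{queue} gives $q(m+1) \ge q(m) + E(m, a_m^*) - \alpha B/M$, so telescoping within frame $r$ together with the reset produces
$$\sum_{m=rJ+1}^{(r+1)J} E(m, a_m^*) \le q((r+1)J+1) + \alpha B / R.$$
The same telescoped drift-plus-penalty inequality from above, after discarding the nonnegative term $V_r \sum_m D(m, a_m^*)$ on the left, yields $L((r+1)J+1) \le UJ^2 + V_r J g_r^*$, hence $q((r+1)J+1) \le \sqrt{2UJ^2 + 2V_r J g_r^*}$. Summing the per-frame energy inequalities across $r = 0, \dots, R{-}1$ then gives exactly the claimed total-energy bound. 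The hard part throughout is the uniform control of the $q$-weighted residual in a non-ergodic setting; the rest is a direct application of Neely-style drift analysis tailored to the frame reset structure of Algorithm~1.
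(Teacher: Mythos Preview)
Your proposal is correct and follows essentially the same Lyapunov drift-plus-penalty argument as the paper's own proof. The paper orders the steps slightly differently---bounding the $J$-slot drift first and only afterwards invoking the per-step minimizer property to substitute the oracle on the right-hand side---but the key ingredients (the single-task drift bound via $U$, the crude estimate $q(m)\le (m-rJ-1)\sqrt{2U}$ within each frame yielding the $UJ(J-1)$ residual, and the queue telescoping for the energy bound) are identical.
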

\begin{proof}
%	See Appendix A in \cite{emmonline}.
		See Appendix~\ref{fsi}.
\end{proof}

Theorem 1 shows that using the proposed EMM-GSI algorithm, the worst-case average delay is no more than $O(1/V)$ with respect to the optimal average delay achieved by the $J$-step lookahead problem. Meanwhile, the energy consumption is within a bounded deviation $O(V)$  compared to the given energy budget. Hence, there exists a delay-energy tradeoff of $[O(1/V),O(V)]$. By adjusting $V$, we can balance the average delay and energy consumption.

\section{Learning with LSI Only} \label{seclsi}
In this section, we consider the scenario that the user has LSI only.
We augment our EMM algorithm with online learning based on the MAB framework in order to learn the optimal BS (i.e. the solution to \textbf{P3}) without initially requiring the BS-side information. Learning the optimal BS incurs additional costs since (1) suboptimal BSs will be selected during the learning process, and (2) radio handover and computation migration is inevitable. We also provide theoretical bounds on the performance loss of the proposed algorithm due to learning.

\subsection{EMM-LSI Algorithm}
When the user has only LSI, MM is much more difficult since there is no \textit{a priori} information about which BS provides the best delay performance while incurring less energy consumption. Specifically, the user cannot directly solve \textbf{P3} since $d(m, n)$ and $e(m, n)$ rely on BS-side information such as $f_{m,n}$, $H_{m,n}$ and $I_{m,n}$, which are unknown. Thus the user has to learn the optimal BS on-the-fly.

A straightforward learning scheme is as follows: the user offloads one subtask of task $m$ to every BS $n$ in $\mathcal{A}(L_m)$ and observes the computation delay $\tilde{d}(m, n)$ and energy consumption $\tilde{e}(m, n)$ (and hence the observed $\tilde{z}(m, n)=V\tilde{d}(m, n)+q(m)\tilde{e}(m, n)$). If observations are accurate, namely $\tilde{d}(m, n) = d(m,n)$ and $\tilde{e}(m, n)=e(m,n)$ (and hence $\tilde{z}(m, n) = z(m, n)$), then learning can be terminated and the remaining $K_m - |\mathcal{A}(L_m)|$ subtasks of task $m$ will be offloaded to the BS that is the solution to $\min_{n} \tilde{z}(m, n)$. However, due to the variance in computation intensity, wireless channel state and many other factors, $\tilde{z}(m, n)$ is only a noisy version of $z(m, n)$. In the presence of such measurement variance, this simple learning algorithm can perform very poorly since the user may get trapped in a BS whose $z(m, n)$ is actually large. Therefore, a more sophisticated and effective learning algorithm requires continuous learning to smooth out the measurement noise. In fact, MM with only LSI manifests a classic sequential decision making problem that involves a critical tradeoff between exploration and exploitation: the user needs to explore the different BSs by offloading subtasks to them in order to learn good estimates of $z(m, n), \forall n \in \mathcal{A}(L_m)$, while at the same time it wants to offload as many subtasks as possible to the \textit{a priori unknown} optimal BS.

Sequential decision making problems under uncertainties have been studied under the MAB framework and efficient learning algorithms have been developed that provide strong performance guarantee. In this paper, we augment our EMM algorithm with the so-called UCB1 algorithm \cite{auer2002finite} to learn the optimal BS. Specifically, UCB1 is an index-based algorithm, which assigns an index to each candidate BS and updates the indices of the BSs as more subtasks of a task have been offloaded. Then the next subtask will be offloaded to the BS with the largest index. The index for a BS $n \in \mathcal{A}(L_m)$ is in fact an upper confidence bound on the empirical estimate of $z(m, n)$. Nevertheless, learning algorithms other than UCB1 can also be incorporated in our framework.

The EMM-LSI algorithm is shown in Algorithm 2. The major difference from Algorithm 1 is that instead of solving \textbf{P3} exactly, we use the UCB1 algorithm as a subroutine to learn the optimal BS to minimize the objective in \textbf{P3}, which is reflected from Lines 5 through 15. Let $\bar{z}_{m,n,k}$ denote empirical sample-mean estimate of $z(m, n)$ after the first $k$ subtasks have been offloaded and their corresponding delay and energy performance have been measured. We use $\theta_{m,n,k}$ to denote the number of subtasks that have been offloaded to BS $n$ up to subtask $k$. Lines 5-9 is the initialization phase, and Lines 10-15 is the continuous learning phase. The decision making problem for each subtask is a minimum seeking problem with computational complexity $O(|\mathcal{A}(L_m)|)$, thus for each task, the computational complexity of the EMM-LSI algorithm is $O(K_m |\mathcal{A}(L_m)|)$.

\begin{algorithm}
	\caption{EMM-LSI Algorithm}
	\begin{algorithmic}[1]
		\State \textbf{Input}: $L_m$, $\mathcal{A}(L_m)$, $\lambda_m$, $\gamma_m$ at the beginning of offloading each task $m$.
		\If{$m = rJ+1, \forall r = 0,1,...,R-1$}
		\State $q(m) \leftarrow 0$ and $V \leftarrow V_r$.
		\EndIf
		\For {$k = 1,...,|\mathcal{A}(L_m)|$}  \Comment{\textit{UCB1 Learning}}
		\State Connect to each BS $n\in\mathcal{A}(L_m)$ once.
		\State Update $\bar z_{m,n,k}=V\tilde d(m,n) + q(m)\tilde e(m,n)$.
		\State Update $\theta_{m,n,k}=1$.
		\EndFor
		\For{$k =|\mathcal{A}(L_m)|+1,...,K_m$}
		\State Connect to $a_m^k= \arg\min_{n} \left\{\bar z_{m,n,k}- \beta\sqrt{\frac{2\ln k}{\theta_{m,n,k}}}\right\}$.
		\State Observe $\tilde d(m,a_m^k)$ and $\tilde e(m,a_m^k)$.
		\State $\bar z_{m,a_m^k,k} \leftarrow \frac{\theta_{m,a_m^k,k}\bar{z}_{m,a_m^k,k} + V\tilde d(m,a_m^k) + q(m)\tilde e(m,a_m^k)}{\theta_{m,a_m^k,k}+1}$.
		\State $\theta_{m,a_m^k,k} \leftarrow \theta_{m,a_m^k,k} + 1$.
		\EndFor
		\State Update $q(m)$ according to \eqref{queue}.
	\end{algorithmic}
\end{algorithm}

\subsection{Algorithm Performance}
In this subsection, we analyze the performance of EMM-LSI. We first bound the gap between the exact solution of \textbf{P3} with GSI and the UCB1 learning algorithm with LSI for each task. 
%Understanding the performance loss due to learning for each task is important for measuring the performance of EMM-LSI algorithm for the long-term problem. 
We adopt the concept of \emph{learning regret} to measure the performance loss for each task due to learning, which is commonly used in the MAB framework \cite{auer2002finite}. Formally, the learning regret is defined as follows
\begin{align}
R_m = \mathbb{E}[Z(m, \a_m) - Z(m, a_m^*)],
\end{align}
where $Z(m, \a_m) = VD(m, \a_m) + q(m)E(m, \a_m)$ is the weighted cost achieved by the sequence of MM decisions $\a_m$ resulted from UCB1, and $Z(m, a_m^*) = VD(m, a_m^*) + q(m)E(m, a_m^*)$ is achieved by always connecting to the optimal BS $a_m^*$ that solves \textbf{P3}.

Although the learning regret of the UCB1 algorithm has been well understood, characterizing that in our setting faces new challenges: the learning regret is a result of not only offloading subtasks to suboptimal BSs, but also radio handover and computation migration. Specifically, the learning regret can be decomposed into two terms \cite{agrawal1988asymptotically}, namely the \emph{sampling regret} and the \emph{handover regret}:
\begin{align}
R_m &= \underbrace{\mathbb{E}\left[\sum_{k=1}^{K_m} z(m, a_m^k) - Z(m, a_m^*)\right]}_{\text{sampling regret}} + V\underbrace{\mathbb{E}\left[h(m, \a_m)\right]}_{\text{handover regret}}.
\end{align}

We provide an upper bound on the learning regret of UCB1 considering the handover regret in the following proposition.
\begin{proposition}
	For task $m$ comprising $K_m$ subtasks, the learning regret $R_m$ is upper bounded as follows:
	\begin{align}
	R_m(K_m)\leq &\beta \left[8\sum_{n\neq a_m^*}\frac{\ln K_m}{\delta_{m,n}} + \left(1+ \frac{\pi^2}{3}\right)\sum_{n\neq a_m^*}\delta_{m,n} \right] \nonumber\\
	& +VC_m \left[2\sum_{n\neq a_m^*} \left(\frac{8\ln K_m}{\delta_{m,n}^2} + 1+ \frac{\pi^2}{3} \right)+1 \right] , \label{prop1}
	\end{align}
	where $\beta = \sup_n \tilde{z}(m, n)$ and $\delta_{m,n}=(Z(m,n)-Z(a_m^*)) / \beta  K_m$.
\end{proposition}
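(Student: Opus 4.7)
The plan is to lean on the decomposition already displayed in the statement and bound the sampling regret and the handover regret separately, using (i) the classical UCB1 concentration analysis and (ii) a combinatorial counting argument linking the number of handovers to the number of suboptimal pulls.

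First, for the sampling regret, I would rescale: since $\tilde z(m,n) \in [0,\beta]$ with $\beta = \sup_n \tilde z(m,n)$, dividing the observed costs by $\beta$ yields a standard bounded-reward MAB instance whose suboptimality gap between arm $n$ and the optimal arm $a_m^*$ is exactly $\delta_{m,n} = (z(m,n)-z(m,a_m^*))/\beta$. The exploration bonus in Line 11 of Algorithm 2 carries a factor $\beta$ precisely so that the rescaled index coincides with UCB1 applied to the normalized costs. Applying the textbook UCB1 analysis (Theorem 1 of \cite{auer2002finite}) then gives, for every suboptimal $n$, the bound $\mathbb{E}[\theta_{m,n,K_m}] \le \frac{8\ln K_m}{\delta_{m,n}^2} + 1 + \frac{\pi^2}{3}$ on the expected number of subtasks routed to $n$. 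Weighting each such pull by its cost gap $\beta\,\delta_{m,n}$ and summing over $n\neq a_m^*$ produces the first bracketed term in \eqref{prop1}.

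Second, for the handover regret, I need to convert the above pull counts into a bound on $\mathbb{E}[h(m,\a_m)] = C_m\,\mathbb{E}\bigl[\sum_{k=2}^{K_m}\mathbb{I}\{a_m^k\neq a_m^{k-1}\}\bigr]$. The key combinatorial observation is that each subtask offloaded to a suboptimal BS $n$ can appear in at most two adjacent pairs $(a_m^{k-1},a_m^k)$, so it is responsible for at most two of the indicator events in the sum. Hence the total handover count is at most $2\sum_{n\neq a_m^*}\theta_{m,n,K_m}$, plus a constant originating from the forced round-robin probing in Lines 5--9 (and, if needed, the very last transition back to $a_m^*$). Taking expectation and substituting the same UCB1 bound on $\mathbb{E}[\theta_{m,n,K_m}]$ produces
\[
V\,\mathbb{E}[h(m,\a_m)] \;\le\; V C_m\Bigl[2\sum_{n\neq a_m^*}\Bigl(\tfrac{8\ln K_m}{\delta_{m,n}^2}+1+\tfrac{\pi^2}{3}\Bigr) + 1\Bigr],
\]
which is the second term of \eqref{prop1}. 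Adding the two contributions finishes the proof.

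The main obstacle is the handover step: UCB1 was never designed to minimize switches, so one has to justify that the worst-case pattern (suboptimal pulls always sandwiched between optimal ones, giving the factor of two) actually dominates, and to carefully account for the initialization loop and any boundary effects so that the trailing additive constant really is just $+1$ inside the brackets. The sampling-regret part, by contrast, is a direct invocation of the standard UCB1 bound once the rewards are normalized by $\beta$.
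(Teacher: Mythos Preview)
Your proposal is correct and follows essentially the same route as the paper: decompose into sampling and handover regret, invoke the standard UCB1 pull-count bound $\mathbb{E}[\theta_{m,n,K_m}]\le 8\ln K_m/\delta_{m,n}^2 + 1 + \pi^2/3$ after normalizing by $\beta$, and then convert pull counts into a handover bound via a simple counting argument. The only cosmetic difference is in that last step: the paper writes the total handover count as $\sum_n S_n$ with $S_n$ the number of switches \emph{into} arm $n$, observes $S_{a_m^*}\le \sum_{n\neq a_m^*}S_n + 1$ (each entry into $a_m^*$ is preceded by an exit to some suboptimal arm, except possibly the first), and then uses $S_n\le \theta_{m,n,K_m}$; your ``each suboptimal pull touches at most two adjacent pairs'' is the dual of this and yields the same $2\sum_{n\neq a_m^*}\mathbb{E}[\theta_{m,n,K_m}]+1$ bound.
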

\begin{proof}
%	See Appendix B in \cite{emmonline}.
		See Appendix~\ref{ucb1}.
\end{proof}

\begin{remark}
	Parameter $\beta$ is used to normalize the utility function. In real implementations, it is difficult to obtain the exact value of $\beta$ due to lack of the BS-side state information. However, a reasonably good estimate of $\beta$ can be obtained based on the history data, e.g., setting $\beta$ as the maximum $\tilde{z}(m, n)$ that has been observed.
\end{remark}

The bound on the learning regret established in Proposition 1 is logarithmic in the number of subtasks $K_m$. It also implies that \textbf{P3} can be approximately solved by UCB1 within a bounded deviation, denoted by $W$, since $K_m$ is upper bounded by $\bar{K}$. The performance of EMM-LSI can then be expressed in Theorem 2.

\begin{theorem}
	For any fixed integer $J \in \mathbb{Z}_+$ and $R \in \mathbb{Z}_+$ such that $M = RJ$, the following statements hold.	
	
	(1) The average delay performance achieved by EMM-LSI algorithm satisfies:
	\begin{align}
	d^*_L \leq \frac{1}{R}\sum_{r=0}^{R-1}g^*_r + \frac{UJ+W}{R}\sum_{r=0}^{R-1}\frac{1}{V_r}.
	\end{align}
	
	(2) The total energy consumption is within a bounded deviation:
	\begin{align}
	e^*_L \leq \alpha B + \sum_{r=0}^{R-1} \sqrt{2[UJ^2 + V_rJg^*_r+WJ]}.	
	\end{align}
\end{theorem}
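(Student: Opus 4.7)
The plan is to mirror the Lyapunov drift-plus-penalty argument used for Theorem 1, but to replace the exact minimization of \textbf{P3} with the approximate minimization delivered by the UCB1 subroutine, whose suboptimality gap per task is controlled by Proposition 1. Define the Lyapunov function $L(m) = \tfrac{1}{2} q(m)^2$ and the one-step drift $\Delta(m) = L(m+1) - L(m)$. From the queue recursion \eqref{queue} one obtains the standard inequality
\begin{equation}
\Delta(m) + V D(m, \a_m) \leq U + q(m)\bigl(E(m,\a_m) - \alpha B/M\bigr) + V D(m,\a_m),
\end{equation}
with $U$ as in Theorem 1, the point being that the right-hand side is exactly $U - q(m)\alpha B/M$ plus the weighted cost $Z(m,\a_m) = V D(m,\a_m) + q(m) E(m,\a_m)$ that EMM-LSI attempts to minimize.

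Next I would quantify how close EMM-LSI comes to minimizing this weighted cost. Proposition 1 gives a uniform upper bound $W$ on the expected learning regret $R_m = \mathbb{E}[Z(m,\a_m) - Z(m, a_m^\ast)]$ (uniform because $K_m \leq \bar K$ and the relevant parameters lie in bounded ranges). Taking expectations of the drift-plus-penalty inequality and substituting this bound yields
\begin{equation}
\mathbb{E}[\Delta(m)] + V \,\mathbb{E}[D(m,\a_m)] \leq U - \tfrac{q(m)\alpha B}{M} + \mathbb{E}[Z(m, a_m^\ast)] + W.
\end{equation}
Since $a_m^\ast$ is the exact minimizer of \textbf{P3}, $Z(m, a_m^\ast) \leq V D(m, \a_m^{\diamond}) + q(m) E(m, \a_m^{\diamond})$ for any feasible policy $\a_m^{\diamond}$, in particular for the $J$-step lookahead optimum from \textbf{P2}.

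Then I would telescope the inequality over each frame of length $J$, exactly as in the proof of Theorem 1. Summing from $m = rJ+1$ to $m=(r+1)J$, using $q(rJ+1)=0$ and $q((r+1)J+1) \geq 0$, and invoking the feasibility of the $J$-step lookahead solution (which satisfies \eqref{budgetJ}, making the sum of $E(m,\a_m^{\diamond}) - \alpha B/M$ over the frame non-positive), gives
\begin{equation}
V_r \sum_{m=rJ+1}^{(r+1)J} \mathbb{E}[D(m,\a_m)] \leq U J + W J + V_r J g_r^\ast.
\end{equation}
Dividing by $V_r$, summing over $r=0,\dots,R-1$, and dividing by $M=RJ$ yields the first claim, with the $W/V_r$ contribution being precisely the additional term distinguishing it from Theorem 1.

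For the energy bound, I would follow the same strategy as in the GSI case: bound the peak virtual queue length $q((r+1)J+1)$ within a frame by rearranging the telescoped drift inequality to isolate $\tfrac{1}{2} q((r+1)J+1)^2$, which gives $q((r+1)J+1) \leq \sqrt{2[UJ^2 + V_r J g_r^\ast + WJ]}$, and then using the queue recursion to translate this into the deviation of cumulative energy from $\alpha B$ by summing over the $R$ frames. The main obstacle is the conceptual one of justifying that Proposition 1's regret bound plugs cleanly into the drift-plus-penalty inequality; once we interpret the UCB1 output as an approximate minimizer of the per-task weighted cost with slack $W$, the remaining steps are mechanical extensions of the Theorem 1 argument.
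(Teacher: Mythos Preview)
Your overall approach matches the paper's: feed Proposition~1's per-task regret bound $W$ into the drift-plus-penalty machinery of Theorem~1. However, there is a genuine gap in your telescoping step that leads you to the wrong constant (and in fact to an inconsistency between your two parts).

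After substituting $Z(m,a_m^\ast)\le V D(m,\a_m^\diamond)+q(m)E(m,\a_m^\diamond)$ and summing over the frame, the right-hand side contains
\[
\sum_{m=rJ+1}^{(r+1)J} q(m)\bigl(E(m,\a_m^\diamond)-\alpha B/M\bigr),
\]
not $\sum_m\bigl(E(m,\a_m^\diamond)-\alpha B/M\bigr)$. The budget feasibility \eqref{budgetJ} of the $J$-step lookahead only tells you the \emph{unweighted} sum is $\le 0$; since $q(m)$ varies across the frame (it evolves under the EMM-LSI actions, not the lookahead actions), you cannot conclude the weighted sum is nonpositive. Concretely, $q(m)$ can be small on slots where $E^\diamond(m)-\alpha B/M<0$ and large on slots where it is positive. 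This is exactly why the paper's Theorem~1 proof writes $q(m)=q(rJ+1)+(q(m)-q(rJ+1))$, uses $q(rJ+1)=0$, and bounds the residual term by $\tfrac{J(J-1)}{2}y_{\max}^2\le J(J-1)U$, producing $UJ^2$ rather than $UJ$ in the frame-level inequality
\[
\Delta_J(rJ)+V_r\sum_{m}D_L^\ast(m)\;\le\;UJ^2+V_rJg_r^\ast+WJ.
\]
Your displayed bound $V_r\sum_m\mathbb{E}[D(m,\a_m)]\le UJ+WJ+V_rJg_r^\ast$ is therefore too strong; dividing it through by $V_rJ$ would give $(U+W)/V_r$ instead of the theorem's $(UJ+W)/V_r$. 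You implicitly acknowledge this in part~(2), where you correctly write $q((r+1)J+1)\le\sqrt{2[UJ^2+V_rJg_r^\ast+WJ]}$ with $UJ^2$---so your two parts are not derived from the same telescoped inequality. Once you insert the missing $(q(m)-q(rJ+1))$ bound (identical to the paper's Theorem~1 argument), both parts follow immediately.
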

\begin{proof}
%	See Appendix C in \cite{emmonline}.
		See Appendix~\ref{lsi}.
\end{proof}

Theorem 2 shows that the proposed EMM-LSI algorithm can provide a strong performance guarantee: even if the user cannot acquire the exact BS-side state information, the average delay performance can still be guaranteed through the proposed algorithm, while the energy consumption is within a bounded deviation from the given energy budget.

\subsection{Implementation Considerations} \label{Kstop}
In the proposed EMM-LSI algorithm, the user keeps learning the optimal BS while offloading all $K_m$ subtasks of task $m$. Although Proposition 1 provides an upper bound on the performance loss due to continuous learning, in practice, the loss can be large when the one-time handover cost is relatively large. For instance, when the second-best BS has a similar value of $z(m, n)$ as the optimal BS, the UCB1 algorithm can keep alternating between these two BSs for many subtasks, thereby incurring a significant handover and migration cost. To circumvent this issue, there are two possible heuristic schemes. 

1) The first scheme stops learning after a pre-determined finite number $K_s$ of times of subtask offloading. That is, UCB1 is applied only for the first $K_s$ subtasks. The remaining $K_m - K_s$ subtasks, if any, will all be offloaded to the BS with the lowest value of $\bar{z}_{m, n, K_s}$. Clearly, there is a tradeoff for deciding $K_s$: if $K_s$ is too small, the probability that a suboptimal BS is regarded as the optimal is high, and hence, leading to a large cost for offloading the remaining subtasks to the suboptimal BS. On the other hand, if $K_s$ is too large, a large handover cost may be incurred. We will quantify this tradeoff in our simulation results.

2) The second scheme stops learning when the best and second-best BSs have very similar performance. Specifically, the stopping criteria is
\begin{align}
&\bar{z}_{m, n^*, k} - \bar{z}_{m, n^{\dagger}, k} \leq \epsilon\\
&\theta_{m, n^*, k} \geq K_0, \theta_{m, n^\dagger, k} \geq K_0 ,
\end{align}
where $n^*$ represents the learned best BS and $n^\dagger$ is the learned second-best BS so far, and $\epsilon, K_0$ are pre-determined parameters.

\section{Varying BS Set} \label{seclsid}
In this section, we consider a more general setting in which the set of candidate BSs during the offloading of one task can vary. For example, BSs are turned on/off according to the BS sleeping strategy for energy saving purposes \cite{7114343} or small cell owner-governed processes. We develop a modified version of the EMM-LSI algorithm, called EMM-LSI-V, based on the VMAB framework and characterize its performance.

\subsection{EMM-LSI-V Algorithm}
The varying set of BSs creates a big challenge in learning the optimal BS that solves \textbf{P3}. With the conventional UCB1 algorithm, the user has to restart the learning process whenever a new BS appears. Apparently, this learning strategy is very inefficient since it simply restarts the learning process without reusing what has been learned. Although the available BS set changes, the states of other BSs are likely to remain the same. Therefore, proper learning algorithms that effectively reuse the already learned information are needed. 

To efficiently learn the optimal BS among a varying BS set, we adopt the VMAB framework \cite{bnaya2013social}, in which BSs can appear or disappear unexpectedly with unknown lifespan. Define an epoch as the interval in which the available BS set is invariant, and let $B_m$ be the total number of epochs for task $m$, which is unknown in advance. Note that $B_m=1, \forall m$ corresponds to the case that we considered in Section \ref{seclsi}.
%, where the available BS set remains the same during each task offloading period. Hence, the considered scenario in this section is a generalization. 
The available BS set for epoch $b=1,2,...,B_m$ is denoted as $\mathcal{A}_{m,b}$ and let $\mathcal{A}_m$ be the union of $\mathcal{A}_{m,b}, \forall b=1,...,B_m$. To simplify the problem, we assume that each BS only appears once during each task. If a BS appears for the second time, it can be treated as a new BS. For each BS $n \in \mathcal{A}_{m}$, the lifespan is denoted as $[u_n, v_n]$ with $1\leq u_n, v_n \leq K_m$, which indicates that BS $n$ is present from subtask $u_n$ through subtask $v_n$. We also denote $K_{m, b}$ as the total number of subtasks of task $m$ completed by the end of epoch $b$. Clearly, $K_{m,B_m}=K_m$.  

The EMM-LSI-V algorithm developed on volatile UCB1 (VUCB1) learning is proposed in Algorithm 3. In VUCB1 learning, a UCB1-like algorithm is implemented for each epoch. The differences are two-fold. First, the initialization for each epoch (Lines 6-10) only applies to the newly appeared BSs, while the information for the remaining BSs is retained and hence reused. Second, the index term on Line 12 used to guide the subtask offloading decision takes into account the appearance time of the BS.

\begin{algorithm}
	\caption{EMM- LSI-V Algorithm}
	\begin{algorithmic}[1]
		\State \textbf{Input}: $L_m$, $\lambda_m$, $\gamma_m$ at the beginning of offloading each task $m$.
		\If{$t = rJ+1, \forall r = 0,1,...,R-1$}
		\State $q(m) \leftarrow 0$ and $V \leftarrow V_r$.
		\EndIf
		\For {$k=1,...,K_m$}  \Comment{\textit{VUCB1 Learning}}
		\If { $k$ is the first block of an epoch}
		\State \textbf{Input}: $\mathcal{A}_{m,b}$
		\State Connect to each first appeared BS $n\in\mathcal{A}_{m,b}$ once.
		\State Update $\bar z_{m,n,k}=V\tilde d(m,n) + q(m)\tilde e(m,n)$.
		\State Update $\theta_{m,n,k}=1$.		
		\Else
		\State $a_m^k= \arg\min_{n} \left\{\bar z_{m,n,k}- \beta\sqrt{\frac{2\ln (k-u_n)}{\theta_{m,n,k}}}\right\}$, connect to BS $a_m^k$.
		\State Observe $\tilde d(m,a_m^k)$ and $\tilde e(m,a_m^k)$.
		\State $\bar z_{m,a_m^k,k} \leftarrow \frac{\theta_{m,a_m^k,k}\bar{z}_{m,a_m^k,k} + V\tilde d(m,a_m^k) + q(m)\tilde e(m,a_m^k)}{\theta_{m,a_m^k,k}+1}$.
		\State $\theta_{m,a_m^k,k} \leftarrow \theta_{m,a_m^k,k} + 1$.
		\EndIf
		\EndFor		
		\State Update $q(m)$ according to \eqref{queue}.
	\end{algorithmic}
\end{algorithm}

\subsection{Algorithm Performance}
We characterize the performance of the VUCB1 learning as follows. Let $a_{m,b}^*$ as the optimal BS at epoch $b$ for task $m$. The learning regret is thus
\begin{align}
R_m = &\underbrace{\sum_{b=1}^{B_m} \mathbb{E}\left[\sum_{k=K_{m, b-1}+1}^{K_{m, b}}z(m, a_m^k)- Z(m, a_{m,b}^*)\right]}_{\text{sampling regret}} \nonumber\\
&+ V\underbrace{\mathbb{E}\left[h(m, \a_m)\right]}_{\text{handover regret}}.
\end{align}

%Proposition 2 provides the regret bound. 
\begin{proposition}
	For task $m$ comprising $K_m$ subtasks, if there are $B_m$ epochs, the total regret $R_m$ of VUCB1 is of $O(B_m\ln K_m)$.
\end{proposition}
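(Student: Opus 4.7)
The plan is to reduce the VUCB1 analysis to a per-epoch invocation of the UCB1 argument from Proposition~1, and then sum the per-epoch bounds over the $B_m$ epochs. The key operational observation is that during epoch $b$, EMM-LSI-V behaves like UCB1 restricted to the BS set $\mathcal{A}_{m,b}$, except that the confidence radius $\sqrt{2\ln(k-u_n)/\theta_{m,n,k}}$ is measured in the ``effective age'' $k-u_n$ of each BS since its appearance, which ensures that newly arriving BSs are not unfairly penalized for never having been pulled.

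First I would decompose the sampling regret along epochs, comparing the cost incurred during epoch $b$ against the per-epoch benchmark $Z(m, a_{m,b}^*) = (K_{m,b}-K_{m,b-1})\, z(m, a_{m,b}^*)$. Within each epoch I would mimic the Chernoff--Hoeffding argument used in Appendix~B for Proposition~1 to show that, for every suboptimal BS $n\in\mathcal{A}_{m,b}\setminus\{a_{m,b}^*\}$, the expected number of pulls during epoch $b$ is bounded by $8\ln(K_{m,b}-K_{m,b-1})/\delta_{m,n}^2 + 1 + \pi^2/3$. Multiplying by the corresponding normalized gap $\beta\delta_{m,n}$ and summing across the at most $|\mathcal{A}_{m,b}|-1$ suboptimal BSs yields an $O(\ln K_m)$ sampling regret per epoch; summing over the $B_m$ epochs gives the desired $O(B_m \ln K_m)$ contribution.

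Next I would handle the handover regret, decomposing it into intra-epoch and inter-epoch switches. Intra-epoch handovers are triggered whenever the UCB minimizer changes inside an epoch; since each such event requires a suboptimal BS to be tried at least once, their total number over epoch $b$ is bounded (up to a constant) by the same $O(\ln K_m)$ expression used for the sampling regret, exactly as in the handover analysis of Proposition~1. Inter-epoch handovers contribute at most $B_m$ forced switches, for example when the incumbent BS leaves $\mathcal{A}_{m,b}$ or when the initialization loop in Lines~6--10 must pull each newly appeared BS once. Each switch contributes a bounded cost $C_m$, so the total $V$-weighted handover regret is again $O(B_m \ln K_m)$. Adding the sampling and handover contributions establishes $R_m = O(B_m \ln K_m)$.

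The main obstacle is justifying the shifted confidence radius. The classical UCB1 tail bound $\Pr(|\bar z_{m,n,k} - \mathbb{E}[z(m,n)]| > \beta\sqrt{2\ln k/\theta_{m,n,k}}) \leq k^{-4}$ crucially pairs the logarithm inside the radius with the same index that drives the union bound over rounds. Replacing $\ln k$ by $\ln(k-u_n)$ requires reindexing the tail-bound calculation so that only the $k-u_n$ rounds during which BS $n$ has actually been accessible enter the union bound, and checking that the resulting $\sum_k (k-u_n)^{-4}$ series still converges to an $O(1)$ constant that can be absorbed into the $\pi^2/3$ term. A secondary subtlety is avoiding double counting when a handover coincides with an epoch boundary; I would partition the consecutive-pair switches $\{k : a_m^k\neq a_m^{k+1}\}$ into intra-epoch and inter-epoch categories before applying the two bounds separately.
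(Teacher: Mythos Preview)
Your proposal is correct and follows essentially the same route as the paper: both reduce to a per-epoch UCB1 analysis driven by the shifted confidence radius $\sqrt{2\ln(k-u_n)/\theta_{m,n,k}}$, bound the expected number of suboptimal pulls in each epoch by an $O(\ln K_m)$ expression via Chernoff--Hoeffding, translate that into per-epoch sampling and handover regret, and then sum over the $B_m$ epochs. One minor correction: the logarithm that naturally falls out of the reindexed union bound is $\ln(K_{m,b}-u_n)$ (the time since BS $n$ appeared), not $\ln(K_{m,b}-K_{m,b-1})$ (the epoch length), because the algorithm's exploration bonus is indexed by $k-u_n$; both quantities are at most $\ln K_m$, so the stated order is unaffected.
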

\begin{proof}
%	See Appendix D in \cite{emmonline}.
		See Appendix~\ref{vucb1}.
\end{proof}

Proposition 2 states that VUCB1 learning can provide a bounded deviation, defined as $W'$, from exactly solving \textbf{P3}. Therefore, our EMM-LSI-V algorithm can still provide strong performance guarantee by substituting the bounded deviation $W$ with $W'$  in Theorem 2. 

%%%%%%%%%%%%%%%%%%%%%%%%%%%%%%%%%%%%%%%
\section{Simulations} \label{secsim}
In this section, we evaluate the average delay performance and total energy consumption of the proposed EMM algorithms and verify the theoretical results through simulations using MATLAB. We simulate a 1km$\times$1km square area with 49 BSs deployed on a regular grid network. The user can associate with BSs within a radius of 150m. The user trajectory is generated by the random walk model. The wireless channel gain is modeled as $H_{m,n}= 127 + 30\times\log d$, as suggested in \cite{7842625}. Besides, channel bandwidth $W = 20 \text{MHz}$, noise power $\sigma^2 =2\times10^{-13}\text{W}$, and transmit power $P_{\mathrm{tx}}=0.5\text{W}$.

%Since BSs are densely deployed, there are multiple available BSs to provide service for a user at any location, and the user will choose one BS for task/subtask offloading. 

 We consider an application of video stream analysis with totally $M=500$ video tasks generated during the entire trip. Each subtask is a one-second video clip. According to \cite{anjum2016video}, we set $\lambda_0=0.62 \text{Mbits}$, which is the data size of a one-second QCIF format video with $176\times144$ video resolution, $24.8\text{k}$ pixels per frame and 25 fps (frame per second). Each video is set to be $1\text{min}$ to $2\text{min}$ long, i.e., $K_m$ is uniformly selected from $\{60,61,...,120\}$, thus $\lambda_m\in [37.2, 74.4]~\text{Mbits}$. Each subtask has completion deadline $150 \text{ms}$, and the computation intensity $\gamma_m$ is uniformly distributed within $[500, 1000]~\text{cycles/bit}$. 
 Each MEC sever is equipped with multiple CPU cores, and the sum frequency $F_n=25\text{GHz}$. The available computation capability for each task follows uniform distribution with $f_{m,n} \in [0, F_n]~ \text{GHz}$. In addition, one-time handover cost $C_m=5\text{ms}$, and battery capacity $B=1000\text{J}$.

We introduce four benchmark algorithms to evaluate the performance of the proposed EMM algorithms: 
1) \textbf{$J$-step Lookahead}: this is the oracle benchmark described in Section \ref{jstep}. We set $J = 5$ and thus $R =M/J= 100$. Note that solving the $J$-step lookahead problem is extremely computationally complex.
2) \textbf{Delay Optimal (GSI)}: the user always associates with the BS with the lowest delay and disregards the energy consumption constraint. 
3) \textbf{Energy Optimal (GSI)}: the user always associates with the BS with the best channel condition without considering the delay performance. In fact, this is the standard 3GPP LTE handover protocol with Event A3 handover condition where the handover offset is set to be zero (see \cite{ts36331}, Sec. 5.5.4). Both delay optimal and energy optimal benchmarks are implemented in the GSI scenario.
4) \textbf{Radio-LSI}: this benchmark learns the BS with best channel condition based on the MAB theory \cite{shen2017learning}. It is implemented in the LSI scenario to compare with the EMM-LSI algorithm.

\begin{figure}[!htb]
	\centering
	\vspace{-0.2in}
	\subfigure[Average delay]{\label{Performance_E2M21}			
		\includegraphics[width=0.4\textwidth]{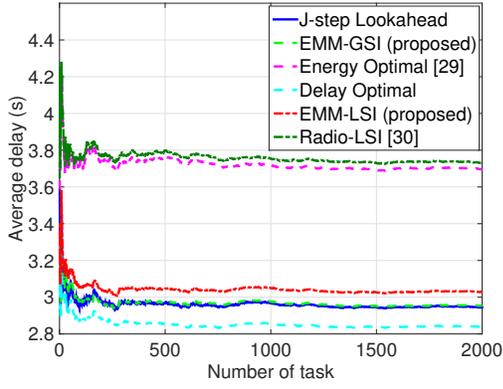}}
%	\vspace{-0.1in}		
	\subfigure[Total energy consumption]{\label{Performance_E2M22}	
		\includegraphics[width=0.4\textwidth]{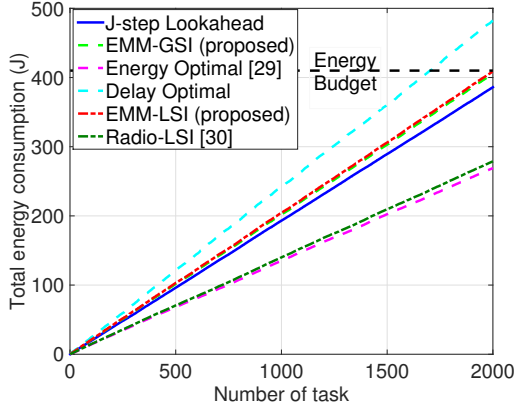}}
%	\vspace{-0.1in}
	\caption{Performance of EMM ($V=0.01$, $\alpha B=410\text{J}$, $K_s=20$, $30\%$ observation variance).}
	\label{Performance_E2M2}
	\vspace{-0.1in}
\end{figure}

Fig. \ref{Performance_E2M2} compares the average delay performance and total energy consumption over the $M$ tasks of EMM-GSI, EMM-LSI and four benchmark algorithms. Here we set $30\%$ observation variance in the LSI scenario and let EMM-LSI algorithm stop learning after offloading $K_s=20$ subtasks to avoid frequent radio handover and computation migration, as discussed in Section \ref{Kstop}. As can be seen, our two EMM algorithms satisfy the energy consumption constraint while keeping the delay low. In the GSI scenario, EMM-GSI algorithm effectively balances delay and energy consumption and achieves the delay close to the J-step Lookahead. EMM-LSI algorithm is just slightly worse than EMM-GSI algorithm. Compared with the Radio-LSI algorithm, EMM-LSI algorithm performs better in delay performance since it learns both radio and computation states rather than only the wireless channel condition.

%The Delay Optimal algorithm achieves the best delay performance at the cost of violating the energy budget constraint. Energy Optimal algorithm uses energy conservatively by always connecting to the BSs with the best channel condition, but the resulting average delay is significantly higher than other algorithms with GSI. Both $J$-step Lookahead and our two EMM algorithms satisfy the energy consumption constraint while keeping the delay low. Among them, $J$-step Lookahead is the best as expected, with the help of future information. EMM-GSI algorithm achieves close delay performance to J-step Lookahead, while the delay performance of EMM-LSI algorithm is just slightly worse than EMM-GSI algorithm. Compared with Radio-LSI algorithm, EMM-LSI algorithm performs better in delay performance since it learns both radio and computation states rather than only the wireless channel condition. 

\begin{figure}[!htb]
	\centering
	\vspace{-0.1in}
	\includegraphics[width=0.4\textwidth]{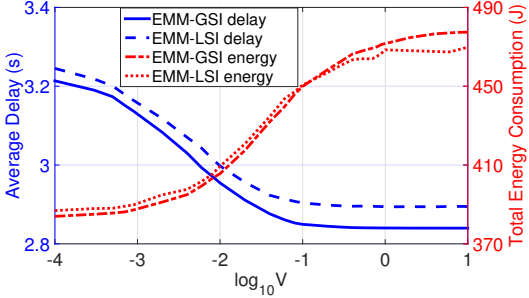}
%	\vspace{-0.1in}
	\caption{Impact of $V$ ($\alpha B=410\text{J}$, $K_s=20$, $30\%$ observation variance).}	\label{Performance_V}
%	\vspace{-0.1in}
\end{figure}

Fig. \ref{Performance_V} shows the impact of control parameter $V$ on the average delay and total energy consumption. By increasing $V$ from $10^{-4}$ to $10$, both EMM-GSI and EMM-LSI algorithms care more about the delay performance, and thus the average delay decreases. However, with less concern on the energy consumption, the total energy consumption increases and will finally exceed the given budget. The delay-energy performance follows the  $[O(1/V),O(V)]$ tradeoff, which verifies Theorem 1 and Theorem 2. Meanwhile, the results also provide guidelines for selecting $V$ in real implementations: under the energy budget constraint, one should choose appropriate $V$ that can minimize the average delay performance.

\begin{figure}[!htb]
	\centering
	\vspace{-0.2in}
	\subfigure[Average delay]{\label{Performance_B1}
		\includegraphics[width=0.4\textwidth]{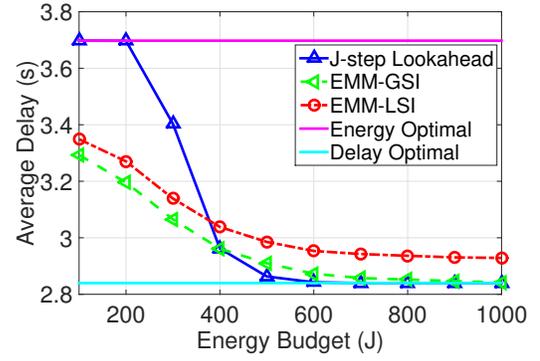}}
%	\hspace{-0.2in}
	\subfigure[Total energy consumption]{\label{Performance_B2}
		\includegraphics[width=0.4\textwidth]{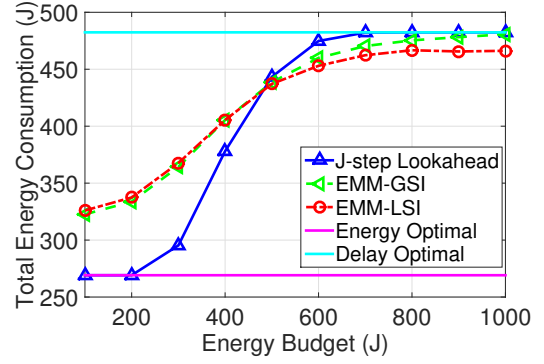}}
%	\hspace{-0.2in}
	\caption{Impact of energy budget $\alpha B$  ($V=0.01$, $K_s=20$, $30\%$ observation variance).}
	\label{Performance_B}
%	\vspace{-0.1in}
\end{figure}

By varying the energy capping parameter $\alpha$ from $10\%$ to $100\%$, we explore the impact of energy budget on the average delay and total energy consumption, as shown in Fig. \ref{Performance_B}. When the energy budget is large, EMM-GSI achieves the optimal delay since the energy constraint is always satisfied, while EMM-LSI incurs additional performance loss due to the learning process. When the energy budget is too low, there is possibly no feasible solution, thus the energy constraint is violated. In between, both EMM algorithms can tradeoff between the average delay and energy consumption, and the performance of EMM-GSI is very close to the $J$-step Lookahead.

\begin{figure}[!htb]
	\centering
	\vspace{-0.2in}
	\subfigure[Probability of connecting suboptimal BS after learning]{\label{Performance_K1}
		\includegraphics[width=0.4\textwidth]{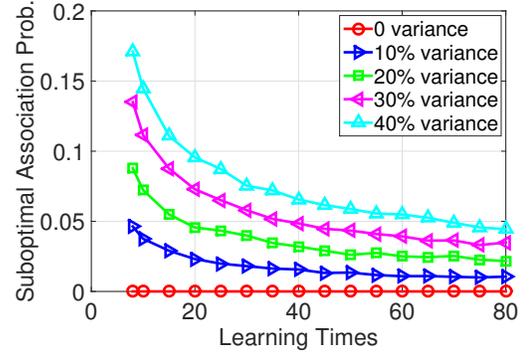}}
%	\hspace{-0.2in}
	\subfigure[Average delay]{\label{Performance_K2}
		\includegraphics[width=0.4\textwidth]{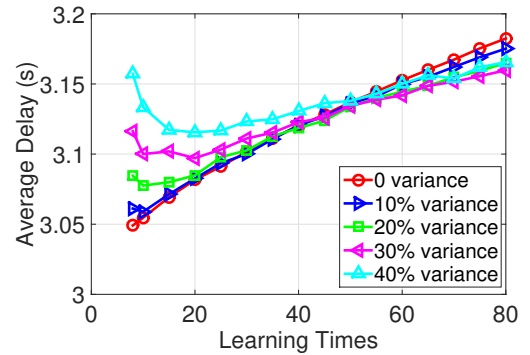}}
%	\hspace{-0.2in}
	\caption{Impact of learning times $K_s$ ($V=0.01$, $\alpha B=410\text{J}$).}
	\label{Performance_K}
%	\vspace{-0.1in}
\end{figure}

For implementation considerations, the impact of the number of subtasks $K_s$ used for learning in EMM-LSI algorithm is further evaluated. 
%Since the maximum number of available BSs in the whole duration of $M$ tasks is $7$ and the learning algorithm needs to associate with each BS for at least one time, we set learning time $K_s$ vary from $8$ to $80$. 
We set $K_s$ to vary from $8$ to $80$, carry out simulations under different observation variance, and repeat 10 times for average. 
Fig. \ref{Performance_K1} shows the probability of connecting to a suboptimal BS after using $K_s$ subtasks to learn. When there is no observation variances, the user can always select the optimal BS after connecting to each available BS once. When the observation variance increases, the probability of connecting to a suboptimal BS increases. However, as $K_s$ increase, the probability of connecting to a suboptimal BS decreases drastically. 
Fig. \ref{Performance_K2} shows the impact of $K_s$ on the average delay. With $K_s$ increasing, the average delay decreases first and then increases, except for the case with zero variance where learning always increases the regret. 
This is because when $K_s$ is small, the probability of connecting to a suboptimal BS after learning is large, which leads to high additional cost. When $K_s$ is large, the frequent handover increases the handover regret and thus degrades the delay performance.
Therefore, learning time $K_s$ should be carefully selected to balance the aforementioned two factors. For example, in our settings, under $30\%$ observation variance, $K_s=20$ can obtain the best delay performance.

\begin{table}[!htb]
%	\vspace{-0.1in}
	\caption{Available BSs and normalized utility}
	\label{normalized_utility}
	\centering
	\begin{tabular}{||c||c|c|c|c|c||}
		\hline
		Index of BS & 1 & 2 & 3&4&5\\
		\hline
		Normalized utility & 0.5&0.8&0.4&0.9&0.7\\
		\hline
		Epoch 1 &${\surd}$ & ${\surd}$ & -- & --&-- \\
		\hline
		Epoch 2 &${\surd}$& ${\surd}$ &  ${\surd}$  &${\surd}$ &-- \\
		\hline
		Epoch 3 & ${\surd}$&${\surd}$& $\times$ & ${\surd}$  &${\surd}$ \\
		\hline
	\end{tabular}
\end{table}

\begin{figure}[!htb]
	\centering
%	\vspace{-0.2in}
	\subfigure[Average utility]{\label{Performance_VUCB1}
		\includegraphics[width=0.4\textwidth]{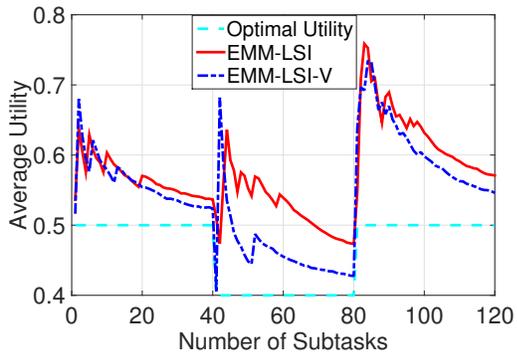}}
	\hspace{-0.1in}
	\subfigure[Handover times]{\label{Performance_VUCB2}
		\includegraphics[width=0.4\textwidth]{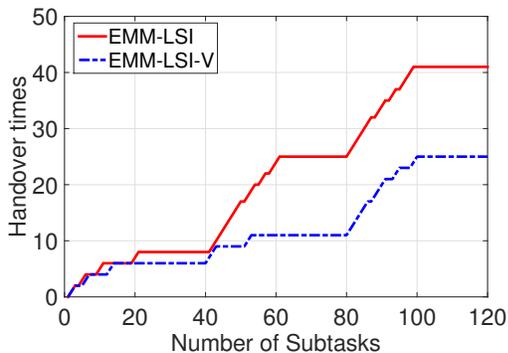}}
	\hspace{-0.1in}
	\caption{EMM-LSI-V algorithm vs. EMM-LSI algorithm.}
	\label{Performance_VUCB}
%	\vspace{-0.1in}
\end{figure}

Finally, we compare the proposed EMM-LSI-V algorithm with EMM-LSI under the dynamic BS set. We illustrate the results by dividing one task into 3 epochs. The available BSs and their normalized utility (defined in \textbf{P3}, which reflects both the delay performance and energy consumption) are shown in Table \ref{normalized_utility}. In epoch $2$, there appears an optimal BS and a suboptimal BS, while in epoch $3$, an optimal BS disappears and a suboptimal BS appears. Each epoch has 40 subtasks and $K_s=20$. 
As shown in Fig. \ref{Performance_VUCB}, $K_s=40$ and $K_s=80$ are the beginning of epoch 2 and epoch 3, thus both algorithms start to learn the environmental change and the average utility suffers sudden increases. However, the EMM-LSI-V algorithm converges faster than EMM-LSI algorithm does, while efficiently reduces the handover times. This is because EMM-LSI-V algorithm is able to retain the information of remaining BSs while EMM-LSI algorithm restarts the learning process whenever there is a change of the BS set.

%%%%%%%%%%%%%%%%%%%%%%%%%%%%%%%%%%%%%%%
\section{Conclusions} \label{seccon}
In this paper, we studied the MM problem for MEC-enabled UDN. We developed a novel user-centric MM framework and designed MM algorithms, called EMM, that can be applied to both GSI and LSI scenarios by integrating Lyapunov optimization and MAB techniques. Taking radio handover and computation migration cost into consideration, we proved that our proposed algorithms can optimize the delay performance while approximately satisfying the energy consumption budget of the user. Furthermore, we proposed a generalized EMM algorithm that can handle varying BS sets based on the VMAB framework. 
Simulations show that our proposed EMM algorithm can achieve close-to-optimal delay performance while satisfying the energy consumption constraint of the user.
Future research directions include designing MM schemes for high mobility scenarios where the user may move a lot during the processing of a task, and considering cooperative computing among BSs. 

%\bibliographystyle{IEEEtran}
%\bibliography{refs}

%%%%%%%%%%%%%%%%%%%%%%%%%%%%%%%%%%%%%%%
\appendices{}

%\section{Proof of Lemma 1} \label{drift}

\section{Proof of Theorem 1} \label{fsi}
For notational convenience, we define $y(m) = E(m, a_m^*) - \alpha B/M$. According to the energy deficit queue in \eqref{queue}, it is easy to see
\begin{align}
q(m+1) - q(m) \geq y(m).
\end{align}
Summing the above over $m = rJ+1, ..., (r+1)J $, using the law of telescoping sums, we get
\begin{align}
\sum_{t=rJ+1}^{(r+1)J} y(m) \leq q((r+1)J+1)-q(rJ+1) \label{ym},
\end{align}
where $q(rJ+1)=0$ and $q((r+1)J+1)$ is the queue length before reset in frame $r+1$. In what follows, we try to bound $q((r+1)J+1)$.

Define the Lyapunov function $L(q(m))$ as
\begin{align}
	L(q(m)) \triangleq \frac{1}{2}q^2(m).
\end{align}
Moreover, we define the 1-slot Lyapunov drift $\Delta_1(m)$ as:
\begin{align}
\Delta_1(m) = L(q(m+1)) - L(q(m)),
\end{align}
where a ``slot'' refers to the duration of offloading and computation for a task. 

Therefore, the 1-slot drift-plus-penalty function can be expressed as $\Delta_1(m)+VD(m,a_m^*)$, where $V>0$ is a control parameter that affects the tradeoff between delay performance and energy consumption.

 According to the definition of energy deficit queue in \eqref{queue}, squaring the queuing dynamics equation results in the following bound
 \begin{align}
 q^2(m+1) &\leq  (q(m) + y(m))^2 \nonumber\\
 &= q^2(m) + y^2(m) + 2q(m)y(m).
 \end{align}
 
 Therefore, the 1-slot Lyapunov drift $\Delta_1(m)$ satisfies
 \begin{align}
 \Delta_1(m) = L(q(m+1)) - L(q(m)) \leq \frac{1}{2} y^2(m) + q(m) y(m) \label{driftb}.
 \end{align}
 Now define $U$ as a positive constant that upper bounds $\frac{1}{2} y^2(m)$. Such a constant exists under the assumption that $y(m)$ is deterministically bounded. By adding $VD(m,a_m^*)$ at both sides of \eqref{driftb}, we can obtain
 \begin{align}
 &\Delta_1(m)+VD(m,a_m^*)  \nonumber\\
 &	\leq U + VD(m,a_m^*) +q(m)y(m).
 \end{align}

Define the $J$-slot Lyapunov drift as $\Delta_J(rJ) \triangleq L(q((r+1)J+1)) - L(q(rJ+1)) $, we have
\begin{align}
&\Delta_J(rJ) + V \sum_{m=rJ+1}^{(r+1)J}D(m,a_m^*) \\ \label{driftj}
\leq & UJ + V \sum_{m=rJ+1}^{(r+1)J}D(m,a_m^*)+ \sum_{m=rJ+1}^{(r+1)J}q(m)y(m) \nonumber\\
=&UJ + V \sum_{m=rJ+1}^{(r+1)J}D(m,a_m^*) + \sum_{m=rJ+1}^{(r+1)J}q(rJ+1)y(m) \nonumber\\
& + \sum_{m=rJ+1}^{(r+1)J} (q(m) - q(rJ+1)) y(m). \nonumber
\end{align}

Let $y_{max}\geq 0$ denote the maximum positive value of $y(m)$ for all $m$ (otherwise $y_{max}=0$), i.e., $q(m+1)-q(m)\leq y_{max}$. Thus, for $m=rJ+1,...,(r+1)J$,
\begin{align}
q(m)-q(rJ+1)\leq (m-(rJ+1)) y_{max}.
\end{align}
The last term on the right hand side of \eqref{driftj} satisfies
\begin{align}
&\sum_{m=rJ+1}^{(r+1)J} (q(m) - q(rJ+1)) y(m) \nonumber\\
\leq&\sum_{m=rJ+1}^{(r+1)J} (m-(rJ+1)) y^2_{max} \nonumber\\
=&\frac{J(J-1)}{2}y^2_{max} \leq J(J-1)U	.
\end{align}

The right hand side of \eqref{driftj} is bounded by
\begin{align}
&\Delta_J(rJ) + V \sum_{m=rJ+1}^{(r+1)J}D(m,a_m^*) \nonumber\\
\leq& UJ^2 + V \sum_{m=rJ+1}^{(r+1)J}D(m,a_m^*) .  \label{driftjr}
\end{align}

By applying EMM-GSI algorithm on the left-hand side and considering the optimal $J$-step lookahead algorithm on the right-hand side, we obtain the following
\begin{align}\label{V}
\Delta_J(rJ) + V_r \sum_{m=rJ+1}^{(r+1)J}d_G^*(m) \leq UJ^2 + V_r J g^*_r,
\end{align}
where $d_G^*(m)$ is the delay achieved by EMM-GSI algorithm for task $m$.

Therefore,
\begin{align} \label{qr}
q((r+1)J+1) = \sqrt{2\Delta_J(rJ)}\leq \sqrt{2(UJ^2 + V_r J g^*_r )}.
\end{align}

Substituting \eqref{qr} into \eqref{ym}, we have
\begin{align}
\sum_{m=rJ+1}^{(r+1)J} y(m) \leq \sqrt{2(UJ^2 + V_r J g^*_r )}.
\end{align}
Therefore,
\begin{align}
\sum_{m=rJ+1}^{(r+1)J} e_G^*(m) \leq \alpha B/R+\sqrt{2(UJ^2 + V_r J g^*_r )},
\end{align}
where $e_G^*(m)$ is the energy consumption achieved by EMM-GSI algorithm for task $m$.
By summing over $r = 0, 1, ..., R-1$ we prove part (2) of Theorem 1.

By dividing both sides of \eqref{V} by $V_r$, it follows that
\begin{align}
\sum_{m=rJ+1}^{(r+1)J }d_G^*(m) \leq Jg^*_r + \frac{UJ^2}{V_r}.
\end{align}
Thus, by summing over $r = 0, 1, ..., R-1$ and dividing both sides by $RJ$, we prove part (1) of Theorem 1.

\section{Proof of Proposition 1} \label{ucb1}
The proof follows the similar idea of \cite{auer2002finite} and the main difference is that we also bound the handover regret.

Since we only focus on the regret in one task, we omit $m$ for notation convenience. The sampling regret \text{SR} can be written as
\begin{align}
\text{SR}&=\mathbb{E}\left[\sum_{k=1}^{K}z(a^k)- Z(a^*)\right] \nonumber\\
&=\mathbb{E} \left[\sum_{n\in \mathcal{A}(L^m)}\theta_{n,K} \frac{Z(n)}{K} - \theta_{a^*,K} \frac{Z(a^*)}{K}\right] \nonumber\\
&=\sum_{n\neq a^*}\beta\delta_n\mathbb{E}[\theta_{n,K}]	.
\end{align}

We first bound $\theta_{n,K}$. Let $c_{k,s}=\sqrt{2\ln k/s}$, $l$ be any positive integer, and $z'=z/\beta$ is the normalized utility. We have
\begin{align}
	&\theta_{n,K} = 1+\sum_{k=A+1}^{K}\mathbb{I}\left\{a^k=n\right\} \nonumber\\
	&\leq l+ \sum_{k=A+1}^{K}\mathbb{I}\left\{a^k=n, \theta_{n,k-1} \geq l \right\}  \nonumber\\
	&\leq l+\sum_{k=A+1}^{K}\mathbb{I} \left\{ \max _{0<s<k}\bar{z}'_{a^*,s}- c_{k,s} \geq \min_{l\leq s_n<k} \bar{z}'_{n,s_n}-c_{k,s_n}\right\} \nonumber\\
	&\leq l+\sum_{k=1}^{\infty}\sum_{s=1}^{k-1} \sum_{s_n=l}^{k-1} \mathbb{I} \left\{\bar{z}'_{a^*,s}- c_{k,s} \geq \bar{z}'_{n,s_n}-c_{k,s_n}\right\}.
\end{align}

$\mathbb{I} \{\bar{z}'_{a^*,s}- c_{k,s} \geq \bar{z}'_{n,s_n}-c_{k,s_n}\}$ implies that at least one of the following three equations hold
\begin{align}
 \bar{z}'_{a^*,s} &\geq Z(a^*)/\beta K +c_{k,s}, \\
 \bar{z}'_{n,s_n}& \leq Z(n)/\beta K -c_{k,s_n}, \\
   Z(a^*)/\beta K &>Z(n)/\beta K-2c_{k,s_n} .\label{cons1}
\end{align}

By using Chernoff-Hoeffding bound, we have
\begin{align}
&\mathbb{P}\{\bar{z}'_{a^*,s} \geq Z(a^*)/\beta K+c_{k,s}\} \leq e^{-4\ln k}=k^{-4},\\
&\mathbb{P}\{ \bar{z}'_{n,s_n} \leq Z(n)/\beta K -c_{k,s_n} \} \leq k^{-4}.
\end{align}

When $l\geq\lceil \frac{8\ln K}{\delta^2_n}\rceil$, \eqref{cons1} not holds because
\begin{align}
 &~~Z(a^*)\beta K -Z(n)\beta K+2c_{k,s_n}  \nonumber\\
 & =Z(a^*)\beta K -Z(n)\beta K+2\sqrt{2\ln k/s_n} \nonumber \\
 &\leq Z(a^*)\beta K -Z(n)\beta K+\delta_n=0.
\end{align}

Then for any $n \neq a^{*}$, we have
\begin{align}
	\mathbb{E}[\theta_{n,K}] \leq& \left \lceil \frac{8\ln K}{\delta^2_n}\right \rceil+\sum_{k=1}^{\infty}\sum_{s=1}^{k-1} \sum_{s_n=l}^{k-1} \left(\mathbb{P}\{\bar{z}'_{a^*,s} \geq Z(a^*)/\beta K  \right.\nonumber  \\
&\left.+c_{k,s} \}+\mathbb{P}\{ \bar{z}'_{n,s_n} \leq Z(n)/\beta K -c_{k,s_n}\}\right)  \nonumber\\
\leq &\left \lceil \frac{8\ln K}{\delta^2_n}\right \rceil +\sum_{k=1}^{\infty}\sum_{s=1}^{k-1} \sum_{s_n=l}^{k-1} 2k^{-4}\\
\leq &\frac{8\ln K}{\delta^2_n} + 1+ \frac{\pi^2}{3}.
\end{align}

The upper bound of sampling regret is
\begin{align}
\text{SR}&=\sum_{n\neq a^{*}}\beta \delta(n)\mathbb{E}[\theta_{n,K}]	\nonumber\\
& \leq \beta \left[8\sum_{n\neq a^{*}}\frac{\ln K}{\delta_n} + \left(1+ \frac{\pi^2}{3}\right)\sum_{n\neq a^{*}}\delta_n \right].
\end{align}

The upper bound of handover regret is
\begin{align}
\text{HR}& =V\mathbb{E}[h(m, \a_m) ]\nonumber\\
&= VC \mathbb{E}\left[\sum_{k=2}^{K}\mathbb{I}\{a^k\neq a^{k-1}\}\right]\nonumber\\
&=VC\sum_{n\in \mathcal{A}(L^m)}\mathbb{E}\left[\sum_{k=2}^{K}\mathbb{I}\{a^k=n, a^{k-1}\neq n \}\right] .\nonumber\\
\end{align}

Let $S_n=\sum_{k=2}^{K}\mathbb{I}\{a^k=n, a^{k-1}\neq n \}$ count the handover times from BS $n$ to other BSs. Then
\begin{align}
\text{HR}&=VC \left( \sum_{n\neq a^{*}}\mathbb{E}[S_n] + \mathbb{E} [S_{a^{*}}] \right) \nonumber\\
&\leq V C \left(2\sum_{n\neq a^{*}}\mathbb{E}[S_n] +1\right) \leq VC\left(2\sum_{n\neq a^{*}} \mathbb{E} [\theta_{n,K}] + 1 \right) \nonumber\\
&\leq VC \left(2\sum_{n\neq a^{*}}\left[\frac{8\ln K}{\delta^2_n} + 1+ \frac{\pi^2}{3} \right]+1\right ).
\end{align}

By adding \text{SR} and \text{HR}, we prove Proposition 1.

\section{Proof of Theorem 2} \label{lsi}
Let $d_L^*(m)$ and $e_L^*(m)$ be the delay and energy consumption of task $m$ achieved by EMM-LSI algorithm, respectively. From \eqref{prop1}, we get
\begin{align}
	Vd_L^*(m)+q(m)e_L^*\leq Vd_G^*(m)+q(m)e_G^*(m)+W \label{prop1_1}.
\end{align}

 Substituting \eqref{prop1_1} into \eqref{V}, we get
 \begin{align}
 \Delta_J(rJ) + V_r \sum_{m=rJ+1}^{(r+1)J}d_L^*(m) \leq UJ^2 + V_r J g^*_r  +WJ \label{VV}.
 \end{align}

 Thus
\begin{align}
&q((r+1)J+1) = \sqrt{2\Delta_J(rJ)} \nonumber\\
\leq &\sqrt{2[UJ^2 + V_r J g^*_r +WJ]}.
\end{align}

 By \eqref{qr}, we have
 \begin{align}
 \sum_{m=rJ+1}^{(r+1)J} y(m) \leq \sqrt{2[UJ^2 + V_r J g^*_r +WJ]}.
 \end{align}
 By summing over $r = 0, 1, ..., R-1$ we prove part (2) of Theorem 2.

 By dividing both sides of \eqref{VV} by $V_r$, it follows that
 \begin{align}
 \sum_{m=rJ+1}^{(r+1)J }d_L^*(m) \leq Jg^*_r + \frac{UJ^2+WJ}{V_r}.
 \end{align}
 Thus, by summing over $r = 0, 1, ..., R-1$ and dividing both sides by $RJ$, we prove part (1) of Theorem 2.

 \section{Proof of Proposition 2} \label{vucb1}
We only focus on the regret in one task, and thus omit $m$ for notation convenience. We first prove that both the sampling regret and hanover regret in each epoch is $O(\ln K)$.

We first bound the expectation of $\theta_{n,b,K_b} $, which indicates the connection times to an suboptimal BS $n$ in each epoch $b$ after offloading $K_b$ tasks. Let $l$ be any positive integer, $c_{k,s,u}=\sqrt{2\ln (k-u)/s}$. Let $z'=z/\beta$, and $a^*$ be replaced by $a_b^*$, we have
\begin{align}
	&\theta_{n,b,K_b} =\sum_{k=K_{b-1}+1}^{K_b}\mathbb{I}\{a^k=n\} \nonumber\\
	&\leq	l+ \sum_{k=K_{b-1}+1}^{K_b}\mathbb{I}\{a^k=n, \theta_{n,k-1,b} \geq l \}  \nonumber\\
	&\leq	l+ \sum_{k=K_{b-1}+1}^{K_b}\mathbb{I}\left\{ \max _{K_{b-1}<s<k}\bar{z}'_{a^*,s}- c_{k,s,u_{a^*}} \right. \geq \nonumber\\
	&~~~\left. \min_{K_{b-1}+l\leq s_n<k} \bar{z}'_{n,s_n}-c_{k,s_n,u_n}\right\} \nonumber\\
	&\leq l+ \sum_{k=K_{b-1}+1}^{K_b}\sum_{s=K_{b-1}+1}^{k-1} \sum_{s_n=K_{b-1}+l}^{k-1} \mathbb{I}\{\bar{z}'_{a^*,s}- c_{k,s,u_{a^*}}\nonumber\\
	&~~~ \geq  \bar{z}'_{n,s_n}-c_{k,s_n,u_n}\}.
\end{align}

$ \mathbb{I}\{\bar{z}'_{a^*,s}- c_{k,s,u_{a^*}} \geq\bar{z}'_{n,s_n}-c_{k,s_n,u_n}\}$ implies that at least one of the following three equations hold
\begin{align}
 \bar{z}'_{a^*,s} &\geq Z(a^*)/\beta K +c_{k,s,u_{a^*}} ,\\
 \bar{z}'_{n,s_n}& \leq Z(n)/\beta K -c_{k,s_n,u_n} ,\\
 Z(a^*)/\beta K &>Z(n)/\beta K-2c_{k,s_n,u_n} .\label{cons2}
\end{align}

By using Chernoff-Hoeffding bound, we have
\begin{align}
&\mathbb{P}\{\bar{z}'_{a^*,s} \geq Z(a^*)/\beta K+c_{k,s,u_{a^*}} \} \leq (k-u_{a^*})^{-4},\\
&\mathbb{P}\{ \bar{z}'_{n,s_n} \leq Z(n)/\beta K -c_{k,s_n,u_n}\} \leq (k-u_n)^{-4}.
\end{align}
When $l\geq \left \lceil \frac{8\ln (K_b-u_n)}{\delta^2_{n,b}}\right \rceil$, \eqref{cons2} not holds because
\begin{align}
&~~Z(a^*)\beta K -Z(n)\beta K+2c_{k,s_n,u_n}  \nonumber\\
&\leq Z(a^*)\beta K -Z(n)\beta K+\delta_{n,b}=0,
\end{align}
where $\delta_{n,b}=(Z(n)-Z(a_{m,b}^*))/K\beta$.

Then for any $n \neq a^{*}$, we have
\begin{align}
&\mathbb{E}[\theta_{n,b,K_b}] \leq  \nonumber\\
&\left \lceil \frac{8\ln (K_b-u_n)}{\delta^2_{n,b}}\right \rceil+\sum_{k=1}^{\infty}\sum_{s=K_{b-1}+1}^{k-1} \sum_{s_n=K_{b-1}+l}^{k-1}  (\mathbb{P}\{\bar{z}'_{a^*,s} \geq \nonumber\\
&Z(a^*)/\beta K+c_{k,s,u_{a^*}} \} +\mathbb{P}\{ \bar{z}'_{n,s_n} \leq Z(n)/\beta K -c_{k,s_n,u_n}\})  \nonumber\\
&\leq \left \lceil  \frac{8\ln (K_b-u_n)}{\delta^2_{n,b}}\right \rceil +\sum_{k=K_{b-1}+1}^{K_b}\sum_{s=K_{b-1}+1}^{k-1} \sum_{s_n=K_{b-1}+l}^{k-1}\nonumber\\
&~\left((k-u_{a^*})^{-4}+  (k-u_n)^{-4}\right) \nonumber\\
&\leq \frac{8\ln  (K_b-u_n)}{\delta^2_{n,b}} + 1+\sum_{k=K_{b-1}+1}^{K_b}\left((k-u_{a^*})^{-2}+(k-u_n)^{-2}\right)\nonumber\\
&\leq \frac{8\ln  (K_b-u_n)}{\delta^2_{n,b}} + 1+\sum_{k=1}^{\infty}2k^{-2} \nonumber\\
&\leq \frac{8\ln  (K_b-u_n)}{\delta^2_{n,b}} + 1+ \frac{\pi^2}{3}.
\end{align}

Since the sampling regret in epoch $b$ is $\text{SR}'_b=\sum_{n\neq a_b^{*}}\beta \delta_{n,b}\mathbb{E}[\theta_{n,b,K_b}] $, $\text{SR}'_b$ is $O(\ln K)$.

The handover regret
\begin{align}
\text{HR}'_b& =VC\sum_{n\in\mathcal{A}_{m,b}}\mathbb{E}[S_{m,n,b}]\nonumber\\
&\leq C\left(2\sum_{n\neq a_b^{*}} \mathbb{E} [\theta_{n,b,K_b}] + 1 \right) \nonumber\\
&\leq C \left(2\sum_{n\neq a_b^{*}}\left[\frac{8\ln  (K_b-u_n)}{\delta^2_{n,b}} + 1+ \frac{\pi^2}{3} \right]+1\right ).
\end{align}

Thus the handover regret $\text{HR}'_b$ is also $O(\ln K)$.

By summing  $\text{SR}'_b$ and $\text{HR}'_b$, and consider totally $B$ epochs, the total regret for each task is $O(B\ln K)$.

\end{document}